
\documentclass[preprint,3p]{elsarticle1}



\usepackage{graphicx}

\usepackage{amssymb}
\usepackage{comment}



\usepackage{lineno}





\usepackage{amsmath}
\usepackage{amsthm}
\usepackage{epsfig}

\usepackage{mathrsfs}

\usepackage{psfrag}
\usepackage{color}

\journal{Linear Algebra and its Applications}

\newtheorem{theorem}{Theorem}
\newtheorem{lemma}[theorem]{Lemma}
\newtheorem{example}[theorem]{Example}
\newtheorem{definition}[theorem]{Definition}

\newtheorem{corollary}[theorem]{Corollary}
\newtheorem{remark}{Remark}

\newtheoremstyle{algstyle}%
  {10mm}       
  {10mm}       
  {\tt}   
  {0pt}        
  {\bfseries}  
  {\newline}   
  {10mm}       
  {\thmname{#1}\thmnumber{ #2}\thmnote{ (#3)}}          
\theoremstyle{algstyle}
\newtheorem{algorithm}{Algorithm}

\newtheoremstyle{algdashstyle}%
  {10mm}       
  {10mm}       
  {\tt}   
  {0pt}        
  {\bfseries}  
  {\newline}   
  {10mm}       
  {\thmname{#1}\thmnumber{ #2}$'$\thmnote{ (#3)}}          
\theoremstyle{algdashstyle}

\newcommand{\nw}[1]{%
\textbf{#1}%
}

\newcommand{\mnw}[1]{%
\boldsymbol{#1}%
}


\newcommand{\ppmatrix}[1]{%
\begin{pmatrix} #1 \end{pmatrix}%
}



\newcommand{\lrar}{\leftrightarrow}
\newcommand{\equivd}{:\equiv}

\newcommand{\equaln}{\hspace{0.1cm} = \hspace{0.1cm}}

\newcommand{\V}{\mbox{$\cal V$}} 

\newcommand{\F}{\mbox{$\cal F$}} 

            

    \newcommand{\0}{{\mathbf 0}}        
        

             
\newcommand{\Vsp}{{\cal V}_{SP}}           			
\newcommand{\Gsp}{{\cal G}_{SP}}           			
\newcommand{\Ksp}{{\cal K}_{SP}}           			
\newcommand{\Asp}{{\cal A}_{SP}}           			
\newcommand{\A}{{\cal A}}           			
\newcommand{\Ks}{{\cal K}_{S}}           			
\newcommand{\Kpq}{{\cal K}_{PQ}}           			
\newcommand{\Apq}{{\cal A}_{PQ}}           			
\newcommand{\Q}{{\mathbb{Q}}}




 
\newcommand{\Vpq}{{\cal V}_{PQ}}            			
\newcommand{\T}[0]{{\cal T}}                    	
\newcommand{\E}{\mbox{$\cal E$}} 
 
\newcommand{\G}[0]{{\cal G}}                       
  

\newcommand{\K}[0]{{\cal K}}                       


\newcommand{\N}[0]{{\cal N}}    							

 

\newcommand{\B}{\mbox{${\cal B}$}}  				




\newcommand{\g}{\mbox{${\bf g}$}}        				

\newcommand{\I}{\mbox{${\bf I}$}}      				
\newcommand{\x}{\mbox{${\bf x}$}}             		











\begin{document}

\begin{frontmatter}



\title{
On Thevenin-Norton and Maximum power transfer theorems}


\author[hn]{H. Narayanan\corref{cor1}}
\ead{hn@ee.iitb.ac.in}
\cortext[cor1]{Corresponding author}
\author[hari]{Hariharan Narayanan}
\ead{hariharan.narayanan@tifr.res.in}
\address[hn]{Department of Electrical Engineering, Indian Institute of Technology Bombay}
\address[hari]{School of Technology and Computer Science, Tata Institute of Fundamental Research}

\begin{abstract}
In this paper we state and prove complete versions of two basic
theorems of linear circuit theory.

The Thevenin-Norton theorem expresses the port behaviour of 
a linear multiport in terms of the zero source and zero port input conditions
 when the port behaviour has a hybrid input- output representation of the form
\begin{align}
\label{eqn:abs1}
\ppmatrix{i_{P_1}\\v_{P_2}}=\ppmatrix{g_{11}&h_{12}\\h_{21}&r_{22}}\ppmatrix{v_{P_1}\\i_{P_2}}+\ppmatrix{J_{P_1}\\E_{P_2}},
\end{align}
{\it where the partition of $P$ into $P_1,P_2$ is known}.

In this paper we show how to handle multiports 
which have port behaviour equations of the general form
$Bv_P-Qi_P=s,$
 which cannot even be put into the hybrid form of 
Equation \ref{eqn:abs1}, indeed may have number of equations ranging from
$0$ to $2n,$ where $n$ is the number of ports.
We do this 
 through repeatedly solving  with different source inputs,
a larger network obtained by terminating the multiport by its adjoint through a gyrator. The method works for linear multiports which are consistent 
for arbitrary internal source values and further have the property that the 
port conditions uniquely determine internal conditions.

The maximum power transfer theorem states that if the multiport has a Thevenin 
impedance matrix $Z,$ then the maximum power transfer from the multiport takes place when we terminate the multiport by another  whose 
Thevenin
impedance matrix is the adjoint (conjugate-transpose) $Z^*$ of $Z,$
provided $Z+Z^*$ has only positive eigenvalues \cite{desoer1}.
The theorem does not handle the case where the multiport does not have a 
Thevenin or Norton equivalent.

In this paper we present the most general version of maximum power transfer theorem
possible. This version of the theorem states that `stationarity' (derivative 
zero condition) of  power transfer occurs
when the multiport is terminated by its adjoint, provided the resulting network has a solution. If this network does not have a solution
there is no port condition for which stationarity holds.
This theorem does not require that the multiport has a 
hybrid immittance matrix.

%
%

\end{abstract}

\begin{keyword}
Basic circuits, Implicit duality, Thevenin, Maximum power.
\MSC   15A03, 15A04, 94C05, 94C15 

\end{keyword}

\end{frontmatter}


\section{Introduction}
\label{sec:intro}
{\bf Note:}
This paper is a brief and self contained account of the basic circuit theorems extracted from 
`Implicit Linear Algebra and Basic Circuit Theory' (ILABC)
(arXiv:2005.00838v1 [eess.SY]). ILABC attempts to relate implicit 
linear algebra (ILA) and circuit theory and the discussion of the theorems 
in that paper is meant to be an application of ILA.
The present paper does not emphasize ILA but only concentrates on 
Thevenin-Norton  and Maximum Power
 Transfer Theorems.

In this paper we give complete versions of Thevenin-Norton Theorem and Maximum Power
 Transfer Theorem.

Thevenin-Norton Theorem gives a method for computing the port behaviour (formal definition
in Section \ref{sec:matched})
 of a linear multiport which has a hybrid representation (\cite{thevenin,mayer,norton}, see \cite{desoerkuh}
for a standard treatment), and for which we know before hand, which  
ports should be treated as current or voltage input kind.
%
The characteristic feature of this approach is the computation of  the port condition 
for a special port termination (open circuit or short circuit), and then 
the computation of  the resistance, conductance or hybrid matrix setting the sources 
inside to zero. It is clear how to use a standard circuit simulator for this 
purpose, if one knew before hand that a hybrid matrix exists for the network
and which are its current and voltage input ports.
However, in general, a hybrid matrix may not even exist for a multiport.
A simple termination such as open or short circuit may result in inconsistency,
so that a standard circuit simulator would give an error message.

We present in this paper a method which works for very general linear multiports,
 which we call `rigid' (which may have number of port equations ranging from 
$0$ to twice the number of ports). {\it These have nonvoid solution for arbitrary 
internal source values and further have a unique internal solution for a 
given voltage plus current port condition.} These are also the only kind of 
multiports which can be handled by standard circuit simulators,
which are built for solving linear circuits with unique solution.
Our method involves terminating the given multiport by its adjoint (see Subsection \ref{sec:adjoint}) 
through  a multiport gyrator with identity matrix as the gyrator resistance matrix. The gyrator has external current or voltage sources attached 
in such a
way that the original multiport does not see a direct termination by a source
(see Figure \ref{fig:temp4}) and the circuit is solved repeatedly by the simulator.
It is to be noted that the adjoint of a multiport can be built 
by changing the device characteristic block by block, usually without serious computation.

 We next use these ideas to present the most general form of the maximum
 power transfer theorem. We show that the maximum power transfer,
 if it occurs at all, corresponds to the port condition that is obtained 
when we terminate the multiport by its adjoint through a $ 1:1$ ideal transformer.
When the multiport is rigid, if the simulator fails to solve for this termination because there is no solution, it means that 
power transfer can be unbounded. If it fails to solve because the solution is non unique, it means that there are an infinity of  port conditions corresponding to
stationarity of power transfer. 

To summarize, we claim there are two significant contributions in the paper:
\begin{enumerate}
\item If we terminate a rigid multiport by its adjoint, through a $1:1$ gyrator,
the resulting network {\it always has a unique solution}.
This holds even if the gyrator has sources attached (current source in parallel, voltage source in series) to its ports. 
Further, every possible port condition of the multiport can be realized
by a suitable distribution of sources to the gyrator ports.
This yields a technique for computing 
the port behaviour of a rigid multiport, using a conventional 
circuit simulator repeatedly.
\item To find the maximum power transferred by a rigid multiport,
it is enough if we terminate it by its adjoint and solve the resulting
circuit. There is no need to compute its port behaviour, which is usually
a more cumbersome process.
\end{enumerate}

{\it Note on the notation:} We have to deal with solution spaces of equations 
 of the kind $Bv_P-Qi_{P"}=s,$
 rather than with those of the kind $v_P-Zi_{P"}=E.$ 
Instead of working with the adjoint (conjugate transpose) of a matrix $Z,$
we have to work with the adjoint (see Subsection \ref{sec:adjoint}) of the solution space of $Bv_P-Qi_{P"}=0.$
In order for our technique to be effective, we have to show that if 
$Bv_P-Qi_{P"}=s,$
 defines the port behaviour of $\N_P$ then the adjoint of 
the solution space of $Bv_P-Qi_{P"}=0,$
 is the port behaviour of the multiport 
$\N_P^{adj}$ which is the adjoint of $\N_P$ (Corollary  \ref{cor:adjointmultiport}).
We therefore have to use  notation not usual in circuit theory, dealing with operations on vector spaces such as 
sum, intersection, contraction, restriction, matched composition etc.,
instead of operations on matrices.

{\bf Example.}
It is possible to build a rigid (as defined above) multiport $\N_P,$  
using controlled sources, norators and
 nullators,
 which has any given representation at the ports with nonvoid solution.
Suppose the rigid multiport $\N_P$ has the behaviour (see formal definition 
in Section \ref{sec:matched})
$Bv_P-Qi_{P"}=s.$
The number of equations may range from $0$ to $2n,$
 where $n=|P|.$
Neither the 
Thevenin nor the Norton representation, indeed even a hybrid
representation, need exist.
{\it Our method can be used to compute the port behaviour
of this multiport,
whereas the usual Thevenin-Norton theorem cannot handle a case 
of this level of generality.}

\begin{figure}
\begin{center}
 \includegraphics[width=5.5in]{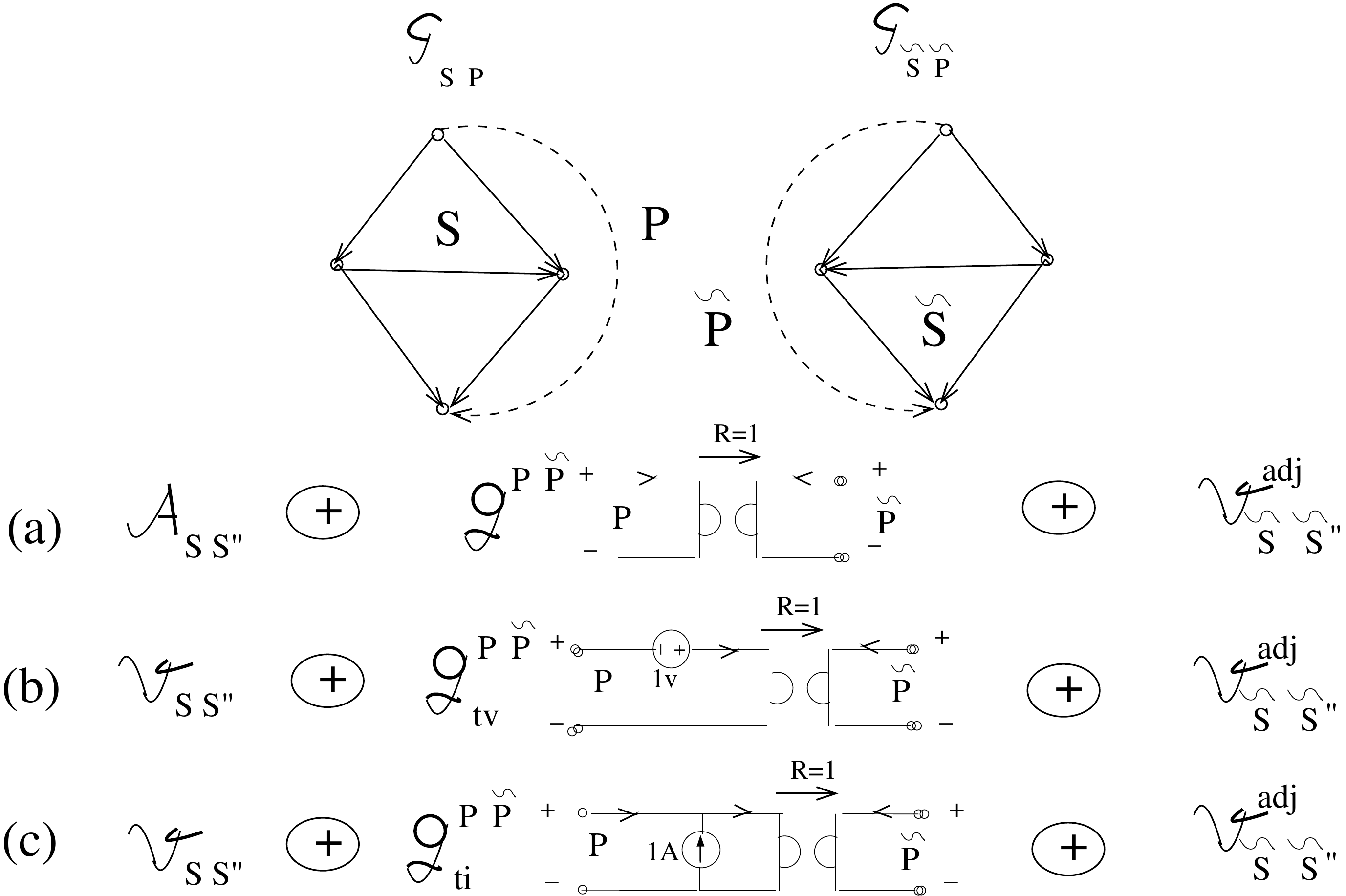}
 \caption{Computation of port behaviour of a multiport}

\label{fig:temp4}
\end{center}
\end{figure}

Briefly, the steps in our method are as follows:
\\1. Build the adjoint multiport $\N^{adj}_{\tilde{P}}.$
This is done by retaining the same graph as $\N_P$ but replacing 
each device by its adjoint (see Example \ref{eg:egadjoint1}). The name of the set of internal edges 
$S$ is changed to $\tilde{S}$ and that of the port edges $P$
is changed to $\tilde{P}.$
\\2. Between every pair of corresponding ports $P_j, \tilde{P}_j,$
insert a source accompanied $1:1$ gyrator. (For simplicity, in Figure \ref{fig:temp4} we have taken $\N_P$ 
to be a $1$-port.)
\\3. Solve the circuit resulting when all the gyrator sources
are zero and internal port sources are active as in Figure \ref{fig:temp4} (a).
Let $(v^p_{P},i^p_{P"})$ be the corresponding port condition.
\\4. Set internal sources to zero but gyrator accompanied sources 
active one at a time as in Figure \ref{fig:temp4} (b),(c).
 The resulting set of voltage- current vectors at $P$ will span the  source 
free port behaviour $\hat{\V}_{P{P"}}.$
(We have to change the sign of the current port vector in the solution
to correspond to port behaviour current, which `enters' the multiport).
\\5. The port behaviour $\hat{\A}_{P{P"}}\equivd (v^p_{P},-i^p_{P"})+\hat{\V}_{P{P"}}.$
 
Further, we can also test this port behaviour for condition of maximum
power transfer by terminating $\N_P$ by its adjoint $\N^{adj}_{\tilde{P}}$ through a $1:1$ 
ideal transformer ($v_{P}=v_{\tilde{P}}; i_{P}=-i_{\tilde{P}}$) and solving the 
resulting circuit by a simulator.
Solving this  circuit is equivalent to solving
Equation \ref{eqn:optprob51} of Section \ref{subsec:maxpower} 
rewritten as
Equation \ref{eqn:optprob511}
 below. ({\it Note that this equation is not being explicitly constructed
 and solved.})
\begin{align}
\label{eqn:optprob511}
\ppmatrix{B&-Q}\ppmatrix{-Q^*\\B^*}\lambda=s. 
\end{align}
If this equation has a unique solution, the corresponding port condition 
$${((\breve{v}^{stat}_{P})^T|(\breve{i}^{stat}_{P"})^T)=\lambda^T(-\overline{Q}|\overline{B}})$$
is the unique stationarity point. We have to verify whether it is maximum 
or minimum by perturbing the port condition around this point.
If this equation has no solution,
the power that can be drawn from the multiport is unbounded.
If it has many solutions, there will be an infinity of port conditions 
corresponding to stationarity of power transfer.
In both these latter cases our method will only indicate non unique solution and halt.

Finally, note that when the Thevenin equivalent exists, the port equation
 $Bv_P-Qi_{P"}=s,$ reduces to $Iv_P-Zi_{P"}=E,$ and the solution of Equation \ref{eqn:optprob511} (taking $Z^*\equivd $ conjugate transpose of $Z$) yields \begin{align}
\label{eqn:optprob512}
\ppmatrix{I&-Z}\ppmatrix{-Z^*\\I}\lambda=E. 
\end{align}
This corresponds to 
$ -(Z+Z^*)\lambda =E, ((\breve{v}^{stat}_{P})^T|(\breve{i}^{stat}_{P"})^T)=\lambda^T(-\overline{Z}|I),$ i.e., to $ (Z+Z^*)(-\breve{i}^{stat}_{P"}) =E.$
This means that the stationarity of  power transfer occurs when we terminate the multiport by the adjoint of the Thevenin impedance.

\section{Preliminaries}
\label{sec:Preliminaries}
The preliminary results and the notation used are from \cite{HNarayanan1997}.
\subsection{Vectors}
A \nw{vector} $\mnw{f}$ on a finite set $X$ over $\mathbb{F}$ is a function $f:X\rightarrow \mathbb{F}$ where $\mathbb{F}$ is either the real field $\Re$ or the complex field $\mathbb{C}.$  


%

The {\it sets on which vectors are defined are  always  finite}. When a vector $x$ figures in an equation, we use the 
convention that $x$ denotes a column vector and $x^T$ denotes a row vector such as
in `$Ax=b,x^TA^T=b^T$'. Let $f_Y$ be a vector on $Y$ and let $X \subseteq Y$. The \textbf{restriction $f_Y|_X$} of $f_Y$ to $X$ is defined as follows:\\
$f_Y|_X \equivd g_X, \textrm{ where } g_X(e) = f_Y(e), e\in X.$


When $f$ is on $X$ over $\mathbb{F}$, $\lambda \in \mathbb{F},$ then  the \nw{scalar multiplication} $\mnw{\lambda f}$ of $f$ is on $X$ and is defined by $(\lambda f)(e) \equivd \lambda [f(e)]$, $e\in X$. When $f$ is on $X$ and $g$ on $Y$ and both are over $\mathbb{F}$, we define $\mnw{f+g}$ on $X\cup Y$ by \\
$(f+g)(e)\equivd f(e) + g(e),e\in X \cap Y,\ (f+g)(e)\equivd  f(e), e\in X \setminus Y,
\ (f+g)(e)\equivd g(e), e\in Y \setminus X.
$
(For ease in readability, we will henceforth use $X-Y$ in place of $X \setminus Y.$)

When $f,g$ are on $X$ over $\mathbb{C},$ the \textbf{inner product} $\langle f, g \rangle$ of $f$ and $g$ is defined by 
$ \langle f,g \rangle \equivd \sum_{e\in X} f(e)\overline{g(e)},$
$ \overline{g(e)}$ being the complex conjugate of $g(e).$
If the field is $\mathbb{R},$ the inner product would reduce to the \nw{dot product} $ \langle f,g \rangle \equivd \sum_{e\in X} f(e){g(e)}.$

When $X, Y, $ are disjoint,  $f_X+g_Y$ is written as  $\mnw{(f_X, g_Y)},$
The disjoint
union of $X$ and $Y.$ is denoted by  $\mnw{X\uplus Y}.$
 A vector $f_{X\uplus  Y}$ on $X\uplus Y$ is  written as $\mnw{f_{XY}}.$

We say $f$, $g$ are \textbf{orthogonal} (orthogonal) iff $\langle f,g \rangle$ is zero.

An \nw{arbitrary  collection} of vectors on $X$ 
is denoted by $\mnw{\mathcal{K}_X}$. 
When $X$, $Y$ are disjoint we usually write $\mathcal{K}_{XY}$ in place of $\mathcal{K}_{X\uplus Y}$.
We write $\K_{XY}\equivd \K_X\oplus \K_Y$ iff
$\K_{XY}\equivd\{f_{XY}:f_{XY}=(f_X,g_Y), f_X\in \K_X, g_Y\in \K_Y\}.$
We refer to $\K_X\oplus \K_Y$ as the \nw{direct sum} of $\K_X, \K_Y.$ 

A collection $\K_X$ is a \nw{vector space} on $X$ iff it is closed under 
addition and scalar multiplication. 
The notation $\mnw{\V_X}$ always denotes
a vector space on $X.$
For any collection $\K_X,$  $\mnw{span(\K_X)}$ is the vector space of all
linear combinations of vectors in it.
We say $\A_X$ is an \nw{affine space} on $X,$ iff it can be expressed as
$x_X+\V_X \equivd \{y_X, y_X=x_X+z_X, z_X\in \V_X\},$ where $x_X$ is a vector and $\V_X,$ a vector space on $X.$
The latter is unique for $\A_X$ and is said to be its \nw{vector space translate}.

For a vector space  $\V_X,$ since we take $X$ to be finite,
any maximal independent subset of $\V_X$ has size less than or equal to $|X|$ and this 
size can be shown
to be unique. A maximal independent subset of a vector
space $\V_X$ is called its \nw{basis} and its  size 
is called the  {\bf dimension} or \nw{rank} of $\V_X$ and denoted by ${\mnw{dim}(\V_X)}$
 or by ${\mnw{r}(\V_X)}.$
For any collection of vectors $\K_X,$
the rank $\mnw{r}(\K_X)$
is defined to be $dim(span(\K_X)).$
The collection of all linear combinations of the rows of a matrix $A$ is a vector space 
that is denoted by $row(A).$

For any collection of vectors
$\mathcal{K}_X,$   the collection $\mnw{\mathcal{K}_X^{*}}$ is defined by
$ {\mathcal{K}_X^{*}} \equivd \{ g_X: \langle f_X, g_X \rangle =0\},$
It is clear that $\mathcal{K}_X^{*}$ is a vector space for 
any $\mathcal{K}_X.$ When $\mathcal{K}_X$ is a vector space $\V_X,$
 and the underlying set $X$ is finite, it can be shown that $({\mathcal{V}_X^{*}})^{*}= \mathcal{V}_X$ 
and  $\mathcal{V}_X,{\mathcal{V}_X^{*}}$ are said to be \nw{complementary orthogonal}. 
The symbol $0_X$ refers to the \nw{zero vector} on $X$ and $\mnw{0_X}$  refers to the \nw{zero vector space} on $X.$ The symbol $\mnw{\F_X}$  refers  to the collection of all vectors on $X$ over the field in question.
It is easily seen, when $X,Y$ are disjoint, and $\K_X, \K_Y$  
contain zero vectors, that $(\K_X\oplus \K_Y)^{*}=
\K_X^{*}\oplus\K_Y^{*}.$

The  \nw{adjoint}  $K^*$ of a matrix $K$ is  defined by
$K^*\equivd \overline{K}^T.$\\
(Later we define the adjoint $\V^{adj}_{PP"}$ of a vector space 
which extends  this notion (see Subsection \ref{sec:adjoint}).)

A matrix of full row rank, whose rows generate a vector space $\V_X,$
is called a \nw{representative matrix} for $\V_X.$
A representative matrix which can be put in the form $(I\ |\ K)$ after column
permutation, is called a {standard representative matrix}.
It is clear that every vector space has a  standard representative matrix.
If $(I\ |\ K)$ is  a standard representative matrix of $\V_X,$
it is easy to see  that $(-K^*|I)$ is a standard representative matrix of $\V^{*}_X.$
Therefore we must have
\begin{theorem}
\label{thm:perpperp}
Let $\V_X$ be a vector space on $X.$ Then
$r(\V_X)+r(\V^{*}_X)=|X|$ and $((\V_X)^{*})^{*}=\V_X.$
\end{theorem}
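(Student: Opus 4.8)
The plan is to exploit the standard representative matrix machinery already set up just before the statement. The key observation, which the excerpt explicitly hands us, is that if $(I\mid K)$ is a standard representative matrix for $\V_X$ (after some column permutation), then $(-K^*\mid I)$ is a standard representative matrix for $\V_X^{*}$. First I would fix such a column permutation and split $X$ into the two blocks corresponding to the $I$-part (of size, say, $k$) and the $K$-part (of size $|X|-k$). Since a representative matrix has full row rank by definition, $(I\mid K)$ has $k$ rows and hence $r(\V_X)=k$; likewise $(-K^*\mid I)$ has $|X|-k$ rows and full row rank, so $r(\V_X^{*})=|X|-k$. Adding these gives $r(\V_X)+r(\V_X^{*})=|X|$, which is the first assertion.

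For the second assertion, $((\V_X)^{*})^{*}=\V_X$, I would first note the easy inclusion $\V_X\subseteq (\V_X^{*})^{*}$: every $f_X\in\V_X$ is orthogonal to every $g_X\in\V_X^{*}$ by the very definition of $\V_X^{*}$, hence $f_X\in(\V_X^{*})^{*}$. To get equality I would apply the rank identity just proved, once to $\V_X$ and once to $\V_X^{*}$ (this is legitimate since $\V_X^{*}$ is itself a vector space on the finite set $X$, as remarked in the text). This yields $r((\V_X^{*})^{*})=|X|-r(\V_X^{*})=|X|-(|X|-r(\V_X))=r(\V_X)$. A subspace contained in another subspace of the same finite dimension must coincide with it, so $(\V_X^{*})^{*}=\V_X$.

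The only genuine point requiring care — and the step I would treat as the main obstacle — is making sure the ``obvious'' facts about standard representative matrices are actually in hand: namely that every vector space on a finite set has a standard representative matrix, that the rows of $(I\mid K)$ really do span exactly $\V_X$ and no more (so that the row count equals the rank), and that $(-K^*\mid I)$ spans exactly $\V_X^{*}$. The excerpt asserts all of these in the paragraph preceding the theorem (``It is clear that every vector space has a standard representative matrix'' and ``it is easy to see that $(-K^*\mid I)$ is a standard representative matrix of $\V_X^{*}$''), so I would simply invoke them; if a fully self-contained argument were wanted, one checks $(-K^*\mid I)$ has the right number of rows, full row rank, and that each of its rows is orthogonal to each row of $(I\mid K)$ (the product is $-K^*\cdot I^T$ conjugated appropriately plus $K\cdot$ — a one-line computation $(-\overline{K}^T)^T + K^{T}\!\cdot\! = 0$ type check over the two blocks), together with the dimension count to conclude the span is all of $\V_X^{*}$ and nothing smaller. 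Everything else is bookkeeping with the column permutation, which does not affect ranks or orthogonality.
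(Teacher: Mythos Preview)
Your proposal is correct and follows exactly the approach the paper intends: the paper does not write out a separate proof but simply prefaces the theorem with ``Therefore we must have,'' after recording that $(I\mid K)$ and $(-K^*\mid I)$ are standard representative matrices of $\V_X$ and $\V_X^{*}$ respectively. Your row-count argument for the rank identity and the inclusion-plus-dimension argument for the double-perp equality are precisely the routine details that fill in this ``Therefore.''
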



The collection
$\{ (f_{X},\lambda f_Y) : (f_{X},f_Y)\in \mathcal{K}_{XY} \}$
is denoted by
$ \mnw{\mathcal{K}_{X(\lambda Y)}}.
$
When $\lambda = -1$ we would write $ {\mathcal{K}_{X(\lambda Y)}}$  more simply as $\mnw{\mathcal{K}_{X(-Y)}}.$
Observe that $(\mathcal{K}_{X(-Y)})_{X(-Y)}=\mathcal{K}_{XY}.$
We say sets $X$, $X"$ are \nw{copies of each other} iff they are disjoint and there is a bijection, usually clear from the context, mapping  $e\in X$ to $e"\in X"$.
When $X,X"$ are copies of each other, the vectors $f_X$ and $f_{X"}$ are said to be copies of each other with  $f_{X"}(e") \equivd  f_X(e), e \in X.$ 
The copy $\K_{X"}$ of $\K_X$ is defined by
 $\K_{X"}\equivd\{f_{X"}:f_X\in \K_X\}.$
When $X$ and $X"$ are copies of each other, the notation for interchanging the positions of variables with index sets $X$ and $X"$ in a collection $\mathcal{K}_{XX"Y}$ is given by $\mnw{(\mathcal{K}_{XX"Y})_{X"XY}}$, that is\\
$(\mathcal{K}_{XX"Y})_{X"XY}
 \equivd \{(g_X,f_{X"},h_Y)\ :\ (f_X,g_{X"},h_Y) \in \mathcal{K}_{XX"Y},\ g_X\textrm{ being copy of }g_{X"},\ f_{X"}\textrm{ being copy of }f_X  \}.$
An affine space $\mathcal{A}_{XX"}$ is said to be {\bf proper}
iff the rank of its vector space translate is $|X|=|X"|.$


\subsection{Sum and Intersection}
Note: This extends the conventional definition of sum and intersection
of vector spaces on the {\it same} set. \\
Let $\mathcal{K}_{SP}$, $\mathcal{K}_{PQ}$ be collections of vectors on sets $S\uplus P,$ $P\uplus Q,$ respectively, where $S,P,Q,$ are pairwise disjoint. The \nw{sum} $\mnw{\mathcal{K}_{SP}+\mathcal{K}_{PQ}}$ of $\mathcal{K}_{SP}$, $\mathcal{K}_{PQ}$ is defined over $S\uplus P\uplus Q,$ as follows:\\
 $\mathcal{K}_{SP} + \mathcal{K}_{PQ} \equivd  \{  (f_S,f_P,0_{Q}) + (0_{S},g_P,g_Q), \textrm{ where } (f_S,f_P)\in \mathcal{K}_{SP}, (g_P,g_Q)\in \mathcal{K}_{PQ} \}.$\\
Thus,
$\mathcal{K}_{SP} + \mathcal{K}_{PQ} \equivd (\mathcal{K}_{SP} \oplus \0_{Q}) + (\0_{S} \oplus \mathcal{K}_{PQ}).$\\
The \nw{intersection} $\mnw{\mathcal{K}_{SP} \cap \mathcal{K}_{PQ}}$ of $\mathcal{K}_{SP}$, $\mathcal{K}_{PQ}$ is defined over $S\uplus P\uplus Q,$ where $S,P,Q,$ are pairwise disjoint, as follows:
$\mathcal{K}_{SP} \cap \mathcal{K}_{PQ} \equivd \{ f_{SPQ} : f_{S P Q} = (f_S,h_P,g_{Q}),$
 $\textrm{ where } (f_S,h_P)\in\mathcal{K}_{SP}, (h_P,g_Q)\in\mathcal{K}_{PQ}.
\}.$\\
Thus,
$\mathcal{K}_{SP} \cap \mathcal{K}_{PQ}\equivd (\mathcal{K}_{SP} \oplus  \F_{Q}) \cap (\F_{S} \oplus \mathcal{K}_{PQ}).$\\

It is immediate from the definition of the operations that sum and intersection of
vector spaces remain vector spaces.

%
%
%

\subsection{Restriction and contraction}

The \nw{restriction}  of $\mnw{\mathcal{K}_{SP}}$ to $S$ is defined by
$\mnw{\mathcal{K}_{SP}\circ S}\equivd \{f_S:(f_S,f_P)\in \mathcal{K}_{SP}\}.$
The \nw{contraction}  of $\mnw{\mathcal{K}_{SP}}$ to $S$ is defined by
$\mnw{\mathcal{K}_{SP}\times S}\equivd \{f_S:(f_S,0_P)\in \mathcal{K}_{SP}\}.$
Unless otherwise stated, the sets on which we perform the contraction operation would 
have the zero vector as a member so that the resulting set would be nonvoid.
We denote by  $\mnw{\mathcal{K}_{SPZ}\circ SP}$, $\mnw{\mathcal{K}_{SPZ} \times SP}$, respectively
when $S,P,Z,$ are pairwise disjoint,  the collections of vectors   $\mnw{\mathcal{K}_{SPZ}\circ (S\uplus P)}$, $\mnw{\mathcal{K}_{SPZ} \times (S \uplus P)}.$

It is clear that restriction and contraction of vector spaces are also
vector spaces.


The following is a useful result on ranks of restriction and contraction 
of vector spaces and on relating restriction and  contraction through orthogonality \cite{tutte,HNarayanan1997}.
\begin{theorem}
\label{thm:dotcrossidentity}
1. $r(\Vsp)=r(\Vsp\circ S)+r(\Vsp\times P).$ 2. 
$\V_{SP}^{*}\circ P= (\V_{SP}\times P)^{*};$
 $\V_{SP}^{*}\times S= (\V_{SP}\circ S)^{*}.$
\end{theorem}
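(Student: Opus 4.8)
The plan is to prove the two items separately by elementary linear algebra, invoking only Theorem~\ref{thm:perpperp}.

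For item 1, I would apply the rank-nullity theorem to the coordinate projection $\theta\colon\Vsp\to\F_S$, $\theta(f_S,f_P)\equivd f_S$. This map is linear; its image is $\{f_S:(f_S,f_P)\in\Vsp\text{ for some }f_P\}=\Vsp\circ S$, and its kernel is the set of vectors $(0_S,f_P)\in\Vsp$, which under the bijection $(0_S,f_P)\mapsto f_P$ is linearly isomorphic to $\{f_P:(0_S,f_P)\in\Vsp\}=\Vsp\times P$. Hence $r(\Vsp)=r(\im\theta)+r(\ker\theta)=r(\Vsp\circ S)+r(\Vsp\times P)$. (One could instead fix a representative matrix of $\Vsp$ which, after a column permutation, has the form $(I\mid K)$, and count pivots in the $S$- and $P$-blocks, but rank-nullity is cleaner.) The point needing care is the orientation of the naming convention: it is the contraction $\Vsp\times P$, not $\Vsp\times S$, that corresponds to the kernel of the projection \emph{onto} the $S$-coordinates; the standing hypothesis that contracted sets contain the zero vector is exactly what makes $\Vsp\times P$ a nonvoid vector space.

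For the contraction identity of item 2, $\Vsp^{*}\times S=(\Vsp\circ S)^{*}$, I would argue by directly unwinding the definitions of $^{*}$, $\times$ and $\circ$. For a vector $g_S$ on $S$, the extended vector $(g_S,0_P)$ lies in $\Vsp^{*}$ iff $\langle(f_S,f_P),(g_S,0_P)\rangle=0$ for every $(f_S,f_P)\in\Vsp$; since $\langle(f_S,f_P),(g_S,0_P)\rangle=\langle f_S,g_S\rangle$, this holds iff $\langle f_S,g_S\rangle=0$ for every $f_S\in\Vsp\circ S$, i.e.\ iff $g_S\in(\Vsp\circ S)^{*}$. Collecting the $g_S$ for which this happens gives $\Vsp^{*}\times S=(\Vsp\circ S)^{*}$, with equality at every step. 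Since nothing in this chain of equivalences distinguishes the label $S$ from $P$, the same argument yields $\W^{*}\times P=(\W\circ P)^{*}$ for every vector space $\W$ on $S\uplus P$.

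For the restriction identity of item 2, $\Vsp^{*}\circ P=(\Vsp\times P)^{*}$, I would deduce it from the contraction identity rather than redo the computation. Applying $\W^{*}\times P=(\W\circ P)^{*}$ with $\W\equivd\Vsp^{*}$ and using $(\Vsp^{*})^{*}=\Vsp$ (Theorem~\ref{thm:perpperp}) gives $\Vsp\times P=(\Vsp^{*}\circ P)^{*}$; taking $^{*}$ of both sides and applying Theorem~\ref{thm:perpperp} once more, now to the vector space $\Vsp^{*}\circ P$, yields $(\Vsp\times P)^{*}=\Vsp^{*}\circ P$. (Alternatively, one establishes the inclusion $\Vsp^{*}\circ P\subseteq(\Vsp\times P)^{*}$ directly and then matches dimensions via item 1 and Theorem~\ref{thm:perpperp}, but the involution route needs no dimension count.) I do not expect a genuine obstacle anywhere here; the only thing to monitor is the bookkeeping forced by the pairwise-disjoint index sets and by the asymmetry built into the restriction/contraction notation.
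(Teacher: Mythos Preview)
Your proof is correct and complete. The paper itself does not supply a proof of Theorem~\ref{thm:dotcrossidentity}; it merely states the result and cites \cite{tutte,HNarayanan1997}. So there is no in-paper argument to compare against, and your rank--nullity plus definition-chasing route (with the involution trick via Theorem~\ref{thm:perpperp} to get the restriction identity from the contraction identity) is a clean, self-contained way to fill the gap.
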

\subsection{KVL and KCL}
{\it By a `graph' we  mean a `directed graph'.}
\nw{Kirchhoff's Voltage Law (KVL)} for a  graph states that the sum of the signed voltages of
edges 
around an oriented loop is zero - the sign of the voltage of an edge 
being positive if the edge orientation agrees with the orientation of the loop
and negative if it opposes. 
\nw{Kirchhoff's Current Law (KCL)} for a  graph states that the sum of the signed currents leaving 
a node is zero, the sign of the current of an edge being positive if 
 its positive endpoint is the node in question.

Let $\G$ be a graph with edge set $S.$ We refer to the space of vectors $v_{S},$ which satisfy Kirchhoff's Voltage Law (KVL) of the graph $\mathcal{G},$
by $\mnw{\V^v(\mathcal{G})}$ and to the space of vectors $i_{S"},$ which satisfy Kirchhoff's Current Law (KCL) of the graph $\mathcal{G},$
by $\mnw{\V^i(\mathcal{G})}.$ (We need to deal with vectors of the kind $(v,i).$ To be consistent with our notation for vectors as functions, this is treated  as a vector $(v_S,i_{S"}),$  where $S"$ is a disjoint copy of $S.$ Therefore $\mnw{\V^v(\mathcal{G})}$ is taken to be on set $S$ and $\mnw{\V^i(\mathcal{G})}$ is taken to be on a disjoint  copy  $S".$)

The following is a fundamental result  on vector spaces associated with graphs.
\begin{theorem}
\label{thm:tellegen}
{\bf Tellegen's Theorem} \cite{tellegen}  Let $\G$ be a graph with edge set S, Then $\V^i(\mathcal{G})= (\V^v(\mathcal{G})^{*})_{S"}.$
\end{theorem}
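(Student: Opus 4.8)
The plan is to prove that $\V^i(\G) = (\V^v(\G)^{*})_{S"}$, i.e.\ that the current space of a graph is exactly the copy on $S"$ of the orthogonal complement of the voltage space. The natural route is to pick a concrete pair of representative matrices for the two spaces and exploit the fundamental incidence/circuit relationship for a graph. First I would fix a spanning forest of $\G$ and write down the fundamental circuit matrix $B_f = (I \mid B_{fc})$ (rows indexed by the co-tree edges, so that each row records one fundamental loop), whose rows span $\V^v(\G)$ by the very statement of KVL: a voltage vector $v_S$ satisfies KVL around every oriented loop iff it satisfies it around the fundamental loops, iff $v_S \in \mathrm{row}(B_f)$. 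Dually I would write the fundamental cutset matrix $Q_f = (-B_{fc}^T \mid I)$ (rows indexed by the tree edges), whose rows span $\V^i(\G)$ because KCL holds iff it holds across every fundamental cutset.

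The key algebraic fact is the classical orthogonality $B_f Q_f^T = 0$, which here is just the observation that $(I \mid B_{fc})$ and $(-B_{fc}^T \mid I)^{\,}$ — more precisely, taking conjugates is harmless since these matrices are real $0,\pm1$ matrices — are a standard representative matrix and its complementary-orthogonal standard representative matrix in the sense already established in the excerpt just before Theorem~\ref{thm:perpperp}: if $(I\mid K)$ represents a space then $(-K^*\mid I)$ represents its complementary orthogonal space. Applying this with $K = B_{fc}$ gives that $\mathrm{row}(Q_f) = (\mathrm{row}(B_f))^{*} = (\V^v(\G))^{*}$, as subspaces of $\F_S$. Then, transporting everything to the copy $S"$ via the bijection $e \mapsto e"$ (which is a linear isomorphism preserving the span), the rows of $Q_f$ regarded on $S"$ span $(\V^v(\G)^{*})_{S"}$; but those same rows, regarded on $S"$, span $\V^i(\G)$ by the fundamental-cutset characterization of KCL. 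Hence the two spaces coincide.

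The step I expect to need the most care is the reduction of the ``all loops'' (resp.\ ``all cutsets'') condition to the ``fundamental loops'' (resp.\ ``fundamental cutsets'') condition — that is, verifying that $\mathrm{row}(B_f)$ really is all of $\V^v(\G)$ and $\mathrm{row}(Q_f)$ really is all of $\V^i(\G)$, rather than a proper subspace. This is standard graph theory (every loop current is an integer combination of fundamental loop currents; every KCL relation is a combination of fundamental cutset relations), and one also checks the dimension count $r(\V^v(\G)) = |S| - (\text{number of tree edges})$, $r(\V^i(\G)) = \text{number of tree edges}$, which is consistent with Theorem~\ref{thm:perpperp}. Given those ingredients, the orthogonality $B_{fc} = -(-B_{fc}^T)^T$ makes the complementary-orthogonality immediate, and no further computation is required; the whole argument is then essentially a dictionary translation between the combinatorial statements of KVL/KCL and the linear-algebraic statements about $\mathrm{row}(\cdot)$ and $(\cdot)^{*}$ set up earlier in the paper.
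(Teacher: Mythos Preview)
The paper does not actually prove this theorem; it is stated with a citation to Tellegen and then used freely. So there is no ``paper's proof'' to compare against, and your task is really to give a self-contained argument.

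Your overall strategy --- pick a spanning forest, write down the fundamental circuit matrix $B_f=(I\mid B_{fc})$ and the fundamental cutset matrix $Q_f=(-B_{fc}^T\mid I)$, and invoke the standard-representative-matrix observation preceding Theorem~\ref{thm:perpperp} to conclude $\mathrm{row}(Q_f)=(\mathrm{row}(B_f))^{*}$ --- is exactly the right one. However, you have the identifications of $\V^v(\G)$ and $\V^i(\G)$ swapped. The sentence ``a voltage vector $v_S$ satisfies KVL around every oriented loop \ldots\ iff $v_S\in\mathrm{row}(B_f)$'' is false: KVL says the signed sum of $v_S$ around each loop vanishes, i.e.\ $B_f v_S=0$, i.e.\ $v_S\in(\mathrm{row}(B_f))^{*}=\mathrm{row}(Q_f)$. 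Dually, $i_{S"}$ satisfies KCL iff it is annihilated by every cutset vector, i.e.\ $Q_f i_{S"}=0$, i.e.\ $i_{S"}\in(\mathrm{row}(Q_f))^{*}=\mathrm{row}(B_f)$ (transported to $S"$). In short, $\V^v(\G)=\mathrm{row}(Q_f)$ and $\V^i(\G)=(\mathrm{row}(B_f))_{S"}$, the reverse of what you wrote.

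Because you swapped both identifications simultaneously, the orthogonality conclusion $\V^i(\G)=(\V^v(\G)^{*})_{S"}$ survives, but the two supporting justifications are each individually wrong and the errors cancel. Fix the labels (voltage space $=$ cut space, current space $=$ cycle space), keep the $B_fQ_f^T=0$ computation, and the proof goes through cleanly. Your remark about the dimension count and about reducing ``all loops/cutsets'' to ``fundamental loops/cutsets'' is well placed and is indeed the only nontrivial graph-theoretic input.
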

\subsection{Networks and multiports}
\label{subsec:networks}
An  {\bf electrical network $\mathcal{N},$} or a `network' in short, is a pair $(\mathcal{G},\mathcal{K}),$ where $\mathcal{G}\equivd (V(\G),E(\G))$ is a graph
and $\mathcal{K}, $ called the \nw{device characteristic} of the network, is a collection of pairs of vectors  $(v_{S},i_{S"}),S\equivd E(\G),$ where $S,S"$ 
are disjoint copies of $S,\ $ $v_{S},i_{S"}$ are real or complex vectors on the edge set of the graph. 
In this paper,  we deal only with affine device characteristics 
and with complex vectors, unless otherwise stated. When the device characteristic $\K_{SS"}$ is affine, we denote it by $\A_{SS"}$ and 
 say the network is \nw{linear}. If $\V_{SS"}$ is the vector space 
translate of $\A_{SS"},$ we say that $\A_{SS"}$ is the \nw{source accompanied}
 form of $\V_{SS"}.$ An affine space $\A_{SS"}$ is said to be \nw{proper} 
iff its vector space
translate $\V_{SS"}$ has dimension $|S|=|S"|.$ 

Let $S$ denote the set of edges of the graph of the network and let
$\{S_1, \cdots , S_k\} $ be a partition of $S,$ each block $S_j$ being an 
\nw{individual device}. Let $S,S"$ be disjoint copies of  $S,$
with $e,e"$ corresponding to edge $e.$ The device characteristic would usually have the form $\bigoplus \A_{S_jS_j"},$  
defined by $(B_jv_{S_j}+Q_ji_{S_j"})=s_j,$
with rows of $(B_j|Q_j)$ being linearly independent. The vector space translate
would have the form $\bigoplus \V_{S_jS_j"},$
$\V_{S_jS_j"}$ being the translate of $\A_{S_jS_j"}.$
Further, usually the blocks $S_j$ would have size one or two,
even
when the network has millions of edges.
Therefore, it would be easy to build $(\bigoplus \V_{S_jS_j"})^{*}=\bigoplus \V_{S_jS_j"}^{*}.$
\\
We say $S_j$ is  a set of \nw{norators}
iff $\A_{S_jS_j"}\equivd \F_{S_jS_j"},$ i.e., there are no constraints on
$v_{S_j},i_{S_j"}.$\\
We say $S_j$ is  a set of \nw{nullators}
iff $\A_{S_jS_j"}\equivd \0_{S_jS_j"},$ i.e.,
$v_{S_j}=i_{S_j"}=0.$\\
We say $\V_{S_j\hat{S}_jS_j"\hat{S}_j"}$ is a \nw{gyrator}
iff
$v_{S_j}=-R i_{\hat{S}_j"}; v_{\hat{S}_j}=R i_{S_j"},$ where $R$  is a positive diagonal matrix.
We denote by $\g ^{S_j\hat{S}_j},$ the gyrator where $R$ is the identity matrix.

A {\bf solution} of $\mathcal{N}\equivd (\G,\K)$ on graph $\G\equivd (V(\G),E(\G))$ is a pair
 $(v_{S},i_{S"}),S\equivd E(\G)$ satisfying\\
$v_{S}\in \V^v(\mathcal{G}), i_{S"} \in \V^i(\mathcal{G})$
  (KVL,KCL)  and $(v_{S},i_{S"})\in \mathcal{K}.$
The KVL,KCL conditions are also called \nw{topological} constraints.
Let  $S,S"$ be disjoint copies of $S,$
let $\V_{S}\equivd \V^v(\mathcal{G}),$ so that, by Theorem \ref{thm:tellegen}, $(\V^{*}_{S})_{S"}= \V^i(\mathcal{G}),
$ and let $ \K_{SS"}$ be the device characteristic of $\N.$
The set of solutions of  $\mathcal{N}$ may be written as, 
$$\V_{S}\cap (\V^{*}_{S})_{S"}\cap \K_{SS"}=[\V_{S}\oplus (\V^{*}_{S})_{S"}]\cap \K_{SS"}.$$
This has the form `[Solution set of topological constraints] $\cap$ [Device characteristics]'.
%

A \nw{multiport} $\mathcal{N}_P\equivd (\Gsp,\K_{SS"})$ is a network with some subset $P$ of its
edges which are norators, specified as `ports'.
Let $\N_P$ be on graph $\G_{SP}$ with device characteristic $\K.$
Let $\V_{SP}\equivd \V^v(\G_{SP}),$ so that $ (\V^{*}_{SP})_{S"P"}= (\V^i(\G_{SP}))_{S"P"},$ and let $\K_{SS"}, $ be the  device characteristic 
on the edge set $S.$ The device characteristic of a  multiport $\mathcal{N}_P$
would be $\K\equivd \K_{SS"} \oplus \F_{PP"}.$
For simplicity we would refer to $\K_{SS"}$ as the device characteristic 
of $\N_P.$
The multiport is said to be \nw{linear} iff its device characteristic is affine.
%
%
The set of solutions of  $\mathcal{N}_P\equivd (\Gsp, \K_{SS"})$ may be writen, using the extended
definition of intersection as
$$\V_{SP}\cap (\V^{*}_{SP})_{S"P"}\cap \K_{SS"}=[\V_{SP}\oplus (\V^{*}_{SP})_{S"P"}]\cap \K_{SS"}.$$
We say the multiport is \nw{consistent} iff its set of solutions 
is nonvoid.

\section{Matched and Skewed Composition}
\label{sec:matched}
In this section we introduce an operation between collections of vectors
motivated by the need to capture the relationship between the port voltages and  currents in a multiport.

Let $\Ksp,\Kpq,$ be collections of vectors respectively on $S\uplus P,P\uplus Q,$ with $S,P,Q,$ being pairwise disjoint.

The \nw{matched composition} $\mnw{\mathcal{K}_{SP} \leftrightarrow \mathcal{K}_{PQ}}$ is on $S\uplus Q$ and is defined as follows:
\begin{align*}
 \mathcal{K}_{SP} \leftrightarrow \mathcal{K}_{PQ} 
  &\equivd \{
                 (f_S,g_Q): (f_S,h_P)\in \Ksp, 
(h_P,g_Q)\in \Kpq\}.
\end{align*}
Matched composition is referred to as matched sum in \cite{HNarayanan1997}.
It can be regarded as the generalization of composition of maps to composition of relations \cite{narayanan2016}.


The \nw{skewed composition} $\mnw{\mathcal{K}_{SP} \rightleftharpoons \mathcal{K}_{PQ}}$ is on $S\uplus Q$ and is defined as follows:
\begin{align*}
 (\mathcal{K}_{SP} \rightleftharpoons \mathcal{K}_{PQ}) 
  &\equivd \{
                 (f_S,g_Q): (f_S,h_P)\in \Ksp, 
(-h_P,g_Q)\in \Kpq\}.
\ \mbox{Note that}
\end{align*}
$$(\mathcal{K}_{SP} \rightleftharpoons \mathcal{K}_{PQ})\ \ \ =\ \ \ \mathcal{K}_{SP} \lrar \mathcal{K}_{(-P)Q}.$$
When $S$, $Y$ are disjoint, both the matched and skewed composition of
$\K_S,\K_Y,$ correspond to the direct sum $\K_S\oplus \K_Y$.
It is clear from the definition of matched composition and that of restriction
and contraction, that
$$\Ksp\lrar \Kpq= (\Ksp\cap \Kpq)\circ SQ;\ \ 
\Ksp\rightleftharpoons \Kpq\ \ =\ \  (\Ksp\cap \K_{(-P)Q})\circ SQ;$$
$$\Ksp\lrar \Kpq = (\Ksp+\K_{(-P)Q})\times SQ;\ \ 
\Ksp\rightleftharpoons \Kpq\ \ =\ \  (\Ksp+ \Kpq)\times SQ.$$
Further, it can be seen that
\\$ \Ks\circ (S-T) =\Ks\lrar \F_T, \Ks\times (S-T)=\Ks\lrar \0_T, T\subseteq S.$
When $\mathcal{K}_{SP}$, $\mathcal{K}_P$ are vector spaces, observe that $(\mathcal{K}_{SP}\leftrightarrow \mathcal{K}_P) = (\mathcal{K}_{SP}\rightleftharpoons \mathcal{K}_P).$
When $S,P,Z,$ are pairwise disjoint,
we have\\
$(\mathcal{K}_{SPZ} \leftrightarrow  \mathcal{K}_{S}) \leftrightarrow  \mathcal{K}_{P} = (\mathcal{K}_{SPZ} \leftrightarrow  \mathcal{K}_{P}) \leftrightarrow  \mathcal{K}_{S} = \mathcal{K}_{SPZ} \leftrightarrow  (\mathcal{K}_{S} \oplus  \mathcal{K}_{P}).$
When $ \mathcal{K}_{S}\equivd \0_S, \mathcal{K}_{P}\equivd \F_P,$ the above reduces to
$ \mathcal{K}_{SPZ}\times PZ\circ Z\equaln \mathcal{K}_{SPZ}\circ SZ\times Z.$

The multiport $\mathcal{N}_P\equivd (\Gsp,\K_{SS"})$ would impose a relationship
 between
$v_{P},i_{P"}.$ 
This relationship is captured by the  \nw{multiport behaviour}  (port behaviour for short) $\breve{\K}_{PP"}$ at $P,$  of $\N_P,$  defined
by\\
$\breve{\K}_{PP"}\equivd [([\V^v(\G_{SP})\oplus (\V^i(\G_{SP}))_{S"P"}]\cap \K_{SS"})\circ PP"]_{P(-P")}= 
([\V^v(\G_{SP})\oplus (\V^i(\G_{SP}))_{S"P"}]\lrar \K_{SS"})_{P(-P")}
$\\$= [\V^v(\G_{SP})\oplus (\V^i(\G_{SP}))_{S"(-P")}]\lrar \K_{SS"}.$
When the device characteristic of $\mathcal{N}_P$ is affine, its {port behaviour} $\breve{\K}_{PP"}$ at $P$ would be
affine if it were not void.

Note that,
if the multiport
is a single port edge in parallel with a positive resistor $R,$\\
$([\V_{SP}\oplus (\V^{*}_{SP})_{S"(-P")}]\cap \K_{SS"})\circ PP"$
would be the solution of $v_{P}= -Ri_{P"}.$
But then $\breve{\K}_{PP"},$ as defined, would be the solution of $v_{P}= Ri_{P"}.$

Let the multiports $\N_{RP},{\N}_{\tilde{P}Q}$ be on graphs $\G_{RSP},\G_{\tilde{P}MQ}$ respectively, with the primed and double primed sets obtained from 
$R,S,P,\tilde{P},M,Q,$ being pairwise disjoint,
and let them have device characteristics $\K^{S},{\K}^{M}$ respectively.
Let $\K^{P\tilde{P}}$ denote a collection of vectors $\K^{P\tilde{P}}_{P\tilde{P}P"\tilde{P}"}.$
The multiport $\mnw{[\N_{RP}\oplus {\N}_{\tilde{P}Q}]\cap \K^{P\tilde{P}}},$ 
with ports $R\uplus Q$ obtained by \nw{connecting $\N_{RP},{\N}_{\tilde{P}Q}$
through $\K^{P\tilde{P}}$},
is on graph $\G_{RSP}\oplus\G_{\tilde{P}MQ}$ (the graph obtained by putting 
$\G_{RSP},\G_{\tilde{P}MQ}$ together with no common nodes) with device characteristic
\\$\K^{S}\oplus {\K}^{M}\oplus  \K^{P\tilde{P}}.$ 
When $R,Q$ are void, $[\N_{RP}\oplus {\N}_{\tilde{P}Q}]\cap \K^{P\tilde{P}}$ would reduce to $[\N_{P}\oplus {\N}_{\tilde{P}}]\cap \K^{P\tilde{P}}$  and would be a network without ports.
In this case we say \nw{the multiport $\N_{P}$ is terminated by ${\N}_{\tilde{P}}$ through $\K^{P\tilde{P}}.$}\\
This network is on  graph $\G_{SP}\oplus \G_{\tilde{P}M}$ with device
characteristic $\K^S_{SS"}
\oplus  {\K}^M_{MM"}\oplus \K^{P\tilde{P}}_{PP"\tilde{P}\tilde{P}"}.$

The following result is useful for relating the port behaviour of a multiport 
with internal sources to that of the source zero version of the multiport.
Its routine proof is omitted.
\begin{theorem}
\label{thm:IIT2}
Let $\Asp,\Apq$ be affine spaces on $S\uplus P,P\uplus Q,$ where $S,P,Q$ 
are pairwise disjoint sets. Let $\Vsp,\Vpq$ respectively, be the vector 
space translates of $\Asp,\Apq.$ Let $\Asp\lrar \Apq$ be \nw{nonvoid} and let\\
$\alpha_{SQ}\in \Asp\lrar \Apq.$ Then,
 $\Asp\lrar \Apq = \alpha_{SQ}+(\Vsp\lrar \Vpq).$
\end{theorem}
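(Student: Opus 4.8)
\textbf{Proof proposal for Theorem~\ref{thm:IIT2}.}
The plan is to prove the two set-inclusions $\Asp\lrar \Apq \subseteqn \alpha_{SQ}+(\Vsp\lrar \Vpq)$ and $\alpha_{SQ}+(\Vsp\lrar \Vpq)\subseteqn \Asp\lrar \Apq$ separately, working directly from the definition of matched composition. Throughout, write $\Asp=\beta_{SP}+\Vsp$ and $\Apq=\gamma_{PQ}+\Vpq$ for some fixed representatives $\beta_{SP}\in\Asp$ and $\gamma_{PQ}\in\Apq$; these exist because the hypothesis that $\Asp\lrar\Apq$ is nonvoid forces both $\Asp$ and $\Apq$ to be nonvoid.

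First I would unpack $\alpha_{SQ}$: since $\alpha_{SQ}\in \Asp\lrar\Apq$, there is some $h_P$ with $(\alpha_S,h_P)\in\Asp$ and $(h_P,\alpha_Q)\in\Apq$. Fix this $h_P$. Now take an arbitrary $(f_S,g_Q)\in\Asp\lrar\Apq$; by definition there is $k_P$ with $(f_S,k_P)\in\Asp$, $(k_P,g_Q)\in\Apq$. Then $(f_S,k_P)-(\alpha_S,h_P)=(f_S-\alpha_S,\,k_P-h_P)\in\Vsp$ (difference of two elements of the affine space $\Asp$) and similarly $(k_P-h_P,\,g_Q-\alpha_Q)\in\Vpq$. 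Hence $(f_S-\alpha_S,\,g_Q-\alpha_Q)$ lies in $\Vsp\lrar\Vpq$ via the common middle vector $k_P-h_P$, so $(f_S,g_Q)\in\alpha_{SQ}+(\Vsp\lrar\Vpq)$. This gives the forward inclusion. For the reverse inclusion, take $(f_S,g_Q)=\alpha_{SQ}+(u_S,w_Q)$ with $(u_S,w_Q)\in\Vsp\lrar\Vpq$, so there is $m_P$ with $(u_S,m_P)\in\Vsp$, $(m_P,w_Q)\in\Vpq$. Then $(\alpha_S+u_S,\,h_P+m_P)=(\alpha_S,h_P)+(u_S,m_P)\in\Asp$ and $(h_P+m_P,\,\alpha_Q+w_Q)\in\Apq$, so $(f_S,g_Q)$ is realized in $\Asp\lrar\Apq$ through the middle vector $h_P+m_P$.

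The only genuine subtlety — and the step I would be most careful about — is the bookkeeping of which coordinates get translated. Matched composition glues along $P$ and keeps $S$ and $Q$, but the affine shift $(u_S,w_Q)$ lives only on $S\uplus Q$ while the witnessing middle vectors live on $P$; one must check that adding $\alpha_{SQ}$ on the outside is consistent with adding $h_P$ (resp.\ $h_P+m_P$) on the inside, i.e.\ that the $P$-components never actually appear in the final object and so are free to be chosen as above. Since $\Vsp,\Vpq$ are vector spaces (closure under addition and the fact that differences of affine-space elements land in the translate), everything goes through; there is no dimension or rank computation needed, which is why the authors can reasonably omit the routine details. I would present it as the two inclusions above, each a two-line verification.
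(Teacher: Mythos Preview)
Your argument is correct; the two-inclusion verification via a common ``middle'' vector on $P$ is exactly the routine check the authors have in mind. Note that the paper actually omits the proof entirely (``Its routine proof is omitted''), so there is nothing to compare against beyond confirming that your approach is the standard one --- and it is. (Minor remark: you introduce $\beta_{SP}$ and $\gamma_{PQ}$ but never use them, since the witness $(\alpha_S,h_P,\alpha_Q)$ already does all the work; you can drop that sentence.)
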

The following result is an  immediate consequence of Theorem \ref{thm:IIT2}.
\begin{theorem}
\label{thm:translatemultiport}
 Let $\N^1_P,{\N}^2_P$ be multiports on the same graph $\G_{SP}$
but with device characteristics \\
$\A_{SS"}, {\V}_{SS"},$ respectively where ${\V}_{SS"},$
is the vector space translate of the affine space $\A_{SS"}.$
Let $\N^1_P,{\N}^2_P$ have 
port behaviours $\breve{\A}^1_{PP"}, \breve{\V}^2_{PP"},$ respectively.
If $\breve{\A}^1_{PP"}\ne \emptyset ,$ then \\$\breve{\V}^2_{PP"}=
([\V^v({\G_{SP}})\oplus (\V^i({\G_{SP}}))_{S"P"}]\lrar {\V}_{SS"})_{P(-P")},$
is the vector space translate of $\breve{\A}^1_{PP"}.$
\end{theorem}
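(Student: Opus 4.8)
The plan is to observe that, up to the cosmetic relabelling $(\cdot)_{P(-P")}$, this is exactly Theorem \ref{thm:IIT2} applied to the ingredients that define the port behaviour. Unwinding the definition of port behaviour, we have
$$\breve{\A}^1_{PP"}=\left([\V^v(\G_{SP})\oplus (\V^i(\G_{SP}))_{S"P"}]\lrar \A_{SS"}\right)_{P(-P")},$$
and likewise $\breve{\V}^2_{PP"}=\left([\V^v(\G_{SP})\oplus (\V^i(\G_{SP}))_{S"P"}]\lrar \V_{SS"}\right)_{P(-P")}$, so the displayed formula for $\breve{\V}^2_{PP"}$ is just the definition of the port behaviour of $\N^2_P$. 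First I would apply Theorem \ref{thm:IIT2}, taking the role of its composition index set ``$P$'' to be $S\uplus S"$, the role of ``$S$'' to be $P\uplus P"$, and the role of ``$Q$'' to be empty; the first affine space is $\V^v(\G_{SP})\oplus (\V^i(\G_{SP}))_{S"P"}$, a vector space and hence its own vector space translate, and the second is $\A_{SS"}$, with vector space translate $\V_{SS"}$.

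Since $(\cdot)_{P(-P")}$ is a bijection on collections of vectors (it is its own inverse, $(\mathcal{K}_{X(-Y)})_{X(-Y)}=\mathcal{K}_{XY}$), the hypothesis $\breve{\A}^1_{PP"}\neq\emptyset$ is equivalent to nonvoidness of $[\V^v(\G_{SP})\oplus (\V^i(\G_{SP}))_{S"P"}]\lrar \A_{SS"}$. Picking any $\beta_{PP"}$ in that matched composition, Theorem \ref{thm:IIT2} gives
$$[\V^v(\G_{SP})\oplus (\V^i(\G_{SP}))_{S"P"}]\lrar \A_{SS"}=\beta_{PP"}+\left([\V^v(\G_{SP})\oplus (\V^i(\G_{SP}))_{S"P"}]\lrar \V_{SS"}\right).$$
Now $(\cdot)_{P(-P")}$ is the linear involution negating the $P"$-coordinates, so it carries vector spaces to vector spaces and satisfies $(\beta_{PP"}+\mathcal{W}_{PP"})_{P(-P")}=(\beta_{PP"})_{P(-P")}+(\mathcal{W}_{PP"})_{P(-P")}$ for every vector $\beta_{PP"}$ and every vector space $\mathcal{W}_{PP"}$ on $P\uplus P"$. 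Applying it to the last display yields $\breve{\A}^1_{PP"}=(\beta_{PP"})_{P(-P")}+\breve{\V}^2_{PP"}$. Finally, $\V^v(\G_{SP})$, $\V^i(\G_{SP})$ and $\V_{SS"}$ are vector spaces, and matched composition and $(\cdot)_{P(-P")}$ both preserve the vector-space property, so $\breve{\V}^2_{PP"}$ is a vector space (in particular nonvoid, containing the zero vector). Hence the last identity exhibits $\breve{\A}^1_{PP"}$ as an affine space whose unique vector space translate is $\breve{\V}^2_{PP"}$, which is the claim.

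I do not expect a genuine obstacle here; the argument is essentially the bookkeeping needed to recast Theorem \ref{thm:IIT2} in the present notation. The one step deserving a moment's care is verifying that $(\cdot)_{P(-P")}$ distributes over the ``translate plus vector space'' decomposition rather than merely over set membership, and that it sends vector spaces to vector spaces; both are immediate from the definition of $\mathcal{K}_{X(-Y)}$, since negating a fixed block of coordinates is an invertible linear map. Everything else follows formally.
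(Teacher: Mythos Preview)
Your proposal is correct and follows exactly the approach the paper intends: the paper states this theorem as ``an immediate consequence of Theorem \ref{thm:IIT2}'' without further proof, and you have simply spelled out that immediate consequence, including the routine check that the involution $(\cdot)_{P(-P")}$ preserves the affine-space decomposition.
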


Implicit Duality Theorem, given below, is a part of network theory folklore.
However, its applications are insufficiently emphasized in the literature.
Proofs and applications may be found in \cite{HNarayananadjoint,HNarayanan1997,schaft1999,narayanan2016}.
A version in the context of Pontryagin Duality is available in \cite{forney2004}.
\begin{theorem}
\label{thm:idt0}
{\bf Implicit Duality Theorem} 
Let $\Vsp, \Vpq$ be vector spaces respectively on $S\uplus P,P\uplus Q,$ with $S,P,Q,$ being pairwise disjoint. We then have,
$(\mathcal{V}_{SP}\leftrightarrow \mathcal{V}_{PQ})^* 
\ \equaln\ (\mathcal{V}_{SP}^* \rightleftharpoons \mathcal{V}_{PQ}^*) 
.$ In particular,
$(\mathcal{V}_{SP}\leftrightarrow \mathcal{V}_{P})^* \ \equaln\ \mathcal{V}_{SP}^* \leftrightarrow \mathcal{V}_{P}^*
.$
\end{theorem}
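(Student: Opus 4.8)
The plan is to prove the general identity $(\mathcal{V}_{SP}\leftrightarrow \mathcal{V}_{PQ})^* = \mathcal{V}_{SP}^* \rightleftharpoons \mathcal{V}_{PQ}^*$ by a double-inclusion argument built directly on the definitions of matched composition, skewed composition, and the orthogonal complement $(\cdot)^*$, using Tellegen-style inner-product bookkeeping across the disjoint index sets $S\uplus P\uplus Q$. First I would recall that $\mathcal{V}_{SP}\leftrightarrow \mathcal{V}_{PQ}=(\mathcal{V}_{SP}\cap \mathcal{V}_{PQ})\circ SQ$, where the intersection is taken in the extended sense on $S\uplus P\uplus Q$ (so $\mathcal{V}_{SP}$ is read as $\mathcal{V}_{SP}\oplus\F_Q$, etc.), and similarly $\mathcal{V}_{SP}^*\rightleftharpoons \mathcal{V}_{PQ}^* = (\mathcal{V}_{SP}^*+\mathcal{V}_{PQ}^*)\times SQ$. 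Then the whole statement reduces, via Theorem \ref{thm:dotcrossidentity}(2) (which exchanges $\circ$ and $\times$ under $(\cdot)^*$) and the elementary facts $(\mathcal{W}_1\cap\mathcal{W}_2)^*=\mathcal{W}_1^*+\mathcal{W}_2^*$ and $(\mathcal{V}_{SP}\oplus\F_Q)^*=\mathcal{V}_{SP}^*\oplus\0_Q$, to a purely formal manipulation on a single ambient space $\F_{SPQ}$.

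Concretely, the key steps in order: (i) rewrite the left side as $((\mathcal{V}_{SP}\oplus\F_Q)\cap(\F_S\oplus\mathcal{V}_{PQ}))\circ SQ$; (ii) apply Theorem \ref{thm:dotcrossidentity}(2) to get $(\,\cdot\,\circ SQ)^* = (\,\cdot\,)^*\times SQ$ — note $\circ SQ$ here means $\circ(S\uplus Q)$, i.e. contracting away $P$; (iii) apply $(\mathcal{W}_1\cap\mathcal{W}_2)^*=\mathcal{W}_1^*+\mathcal{W}_2^*$ on $\F_{SPQ}$ together with $(\mathcal{V}_{SP}\oplus\F_Q)^*=\mathcal{V}_{SP}^*\oplus\0_Q$ and $(\F_S\oplus\mathcal{V}_{PQ})^*=\0_S\oplus\mathcal{V}_{PQ}^*$; (iv) observe that $(\mathcal{V}_{SP}^*\oplus\0_Q)+(\0_S\oplus\mathcal{V}_{PQ}^*)$ is by definition $\mathcal{V}_{SP}^*+\mathcal{V}_{PQ}^*$ in the extended sum sense, so the right-hand side becomes $(\mathcal{V}_{SP}^*+\mathcal{V}_{PQ}^*)\times SQ$, which is exactly $\mathcal{V}_{SP}^*\rightleftharpoons\mathcal{V}_{PQ}^*$ by the identity recorded just after the definition of skewed composition. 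The ``in particular'' clause follows by specializing $Q$ to be void: then $\mathcal{V}_P\leftrightarrow$ and $\rightleftharpoons$ coincide for vector spaces (as noted when $S,Y$ are disjoint and, more relevantly, the displayed remark $(\mathcal{K}_{SP}\leftrightarrow\mathcal{K}_P)=(\mathcal{K}_{SP}\rightleftharpoons\mathcal{K}_P)$ for vector spaces), giving $(\mathcal{V}_{SP}\leftrightarrow\mathcal{V}_P)^* = \mathcal{V}_{SP}^*\leftrightarrow\mathcal{V}_P^*$.

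Alternatively — and this is the route I would actually write out if I wanted a self-contained argument not leaning on Theorem \ref{thm:dotcrossidentity} in the extended setting — I would do the direct inner-product proof of one inclusion and a dimension count for the reverse. For ``$\supseteq$'': take $(g_S,g_Q)\in\mathcal{V}_{SP}^*\rightleftharpoons\mathcal{V}_{PQ}^*$, so there is $h_P$ with $(g_S,h_P)\in\mathcal{V}_{SP}^*$ and $(-h_P,g_Q)\in\mathcal{V}_{PQ}^*$; for any $(f_S,f_Q)\in\mathcal{V}_{SP}\leftrightarrow\mathcal{V}_{PQ}$ pick $k_P$ with $(f_S,k_P)\in\mathcal{V}_{SP}$, $(k_P,f_Q)\in\mathcal{V}_{PQ}$, and compute $\langle f_S,g_S\rangle+\langle f_Q,g_Q\rangle = \big(\langle f_S,g_S\rangle+\langle k_P,h_P\rangle\big) + \big(\langle f_Q,g_Q\rangle-\langle k_P,h_P\rangle\big) = 0+0$, using orthogonality on $S\uplus P$ and on $P\uplus Q$ with the sign from $-h_P$. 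For the reverse inclusion I would compare ranks: by Theorem \ref{thm:perpperp} on the ambient sets, $r\big((\mathcal{V}_{SP}\leftrightarrow\mathcal{V}_{PQ})^*\big)=|S|+|Q|-r(\mathcal{V}_{SP}\leftrightarrow\mathcal{V}_{PQ})$, and I would compute both $r(\mathcal{V}_{SP}\leftrightarrow\mathcal{V}_{PQ})$ and $r(\mathcal{V}_{SP}^*\rightleftharpoons\mathcal{V}_{PQ}^*)$ using Theorem \ref{thm:dotcrossidentity}(1) applied to the intersection/sum forms, showing the dimensions match so the inclusion is equality. The main obstacle is purely bookkeeping discipline: keeping the extended (multi-set) meanings of $\oplus\F$, $\oplus\0$, intersection and sum straight across three index sets, and tracking the single sign flip that distinguishes $\leftrightarrow$ from $\rightleftharpoons$ — there is no genuine analytic difficulty, but it is easy to drop a $\0_Q$ versus $\F_Q$ or lose the minus sign on $h_P$, so I would be careful to state each extended-space convention explicitly before manipulating.
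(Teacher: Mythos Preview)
Your primary approach is correct and is essentially identical to the paper's proof: the paper also writes $(\mathcal{V}_{SP}\leftrightarrow\mathcal{V}_{PQ})^{*}=[(\mathcal{V}_{SP}\cap\mathcal{V}_{PQ})\circ SQ]^{*}$, applies Theorem~\ref{thm:dotcrossidentity}(2) together with $(\mathcal{W}_1\cap\mathcal{W}_2)^{*}=\mathcal{W}_1^{*}+\mathcal{W}_2^{*}$ to get $(\mathcal{V}_{SP}^{*}+\mathcal{V}_{PQ}^{*})\times SQ=\mathcal{V}_{SP}^{*}\rightleftharpoons\mathcal{V}_{PQ}^{*}$, and derives the ``in particular'' from $\mathcal{V}_{(-P)}=\mathcal{V}_{P}$ for vector spaces. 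Your alternative inner-product-plus-dimension-count route is a valid self-contained backup but is not needed.
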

\begin{proof}
It can be shown that $(\V_X+\V_Y)^{*}=(\V^{*}_X\cap \V_Y^{*})$
and, using Theorem \ref{thm:perpperp}, that $(\V_X\cap\V_Y)^{*}=(\V^{*}_X+\V_Y^{*}).$
We have, using Theorem \ref{thm:dotcrossidentity},
$(\Vsp\lrar \Vpq)^{*}= [(\Vsp\cap \Vpq)\circ SQ]^{*}=
[(\Vsp^{*}+ \Vpq^{*})\times SQ]= (\Vsp^{*}\rightleftharpoons\Vpq^{*}).$
For any vector space $\V_X,$ we have $\V_X= \V_{(-X)}.$ 
Therefore $(\mathcal{V}_{SP}\leftrightarrow \mathcal{V}_{P})^* = (\mathcal{V}_{SP}^* \rightleftharpoons \mathcal{V}_{P}^*)
= \mathcal{V}_{SP}^* \lrar \mathcal{V}_{P}^*
.$
\end{proof}
An illustration of   the use of Theorem \ref{thm:idt0} is provided in the 
next subsection (Theorem \ref{thm:adjointmultiport}). 
\subsection{Adjoint of an affine space and the adjoint multiport}
\label{sec:adjoint}
We say $\V_{S_jS_j"},\hat{\V}_{S_jS_j"}$ are \nw{orthogonal duals}
of each other iff $\hat{\V}_{S_jS_j"}= \V^{*}_{S_jS_j"}.$
By Theorem \ref{thm:perpperp},\\ $(\V^{*}_{S_jS_j"})^{*}=\V_{S_jS_j"}.$
Let $\A_{S_jS_j"}$ be an affine space with $\V_{S_jS_j"}$ as its 
vector space translate.
We say $\hat{\V}_{S_jS_j"}$ is the  \nw{adjoint}
of $\A_{S_jS_j"},$ 
denoted by ${\V}^{adj}_{S_jS_j"},$
iff $\hat{\V}_{S_jS_j"}= (\V^{*}_{S_jS_j"})_{(-S_j")S_j}.$
It is easy to see that $(\oplus \A_{S_jS_j"})^{adj}= (\oplus \V_{S_jS_j"}^{adj}).$
It is clear that 
${\V}_{S_jS_j"}= (\hat{\V}^{*}_{S_jS_j"})_{(-S_j")S_j}= ({\V}^{adj}_{S_jS_j"})^{adj}.$
\begin{example}
\label{eg:egadjoint1}
Let ${\A}_{SS"}$ be an affine device characteristic.
Suppose its vector space translate ${\V}_{SS"}$ is defined by (is the solution space of) the hybrid equations
\begin{align}
\label{eqn:egadjoint1}
\ppmatrix{I&0&-g_{11}&-h_{12}\\0&I&-h_{21}&-r_{22}}\ppmatrix{i_{S_1"}\\v_{S_2}\\v_{S_1}\\i_{S_2"}} =  \ppmatrix{0\\0},
\end{align}
where $S$ is partitioned into $S_1,S_2.$
Then ${\V}^*_{SS"}$ is the complementary orthogonal space defined by
\begin{align}
\label{eqn:egadjoint2}
\ppmatrix{g^*_{11}&h^*_{21}&I&0\\h^*_{12}&r^*_{22}&0&I}\ppmatrix{\ \  \tilde{i}_{S_1"}\\\tilde{v}_{S_2}\\\tilde{v}_{S_1}\\\tilde{i}_{S_2"}} =  \ppmatrix{0\\0},
\end{align}
and ${\V}^{adj}_{SS"}\equivd  ({\V}^*_{SS"})_{(-S")S}$ is obtained by 
replacing the $\tilde{v}$ variables by $-\hat{i}$ variables and 
the $\tilde{i}$ variables by $\hat{v}$ variables.
It can be seen that ${\V}^{adj}_{SS"}$ is defined by
\begin{align}
\label{eqn:egadjoint3}
\ppmatrix{g^*_{11}&h^*_{21}&I&0\\h^*_{12}&r^*_{22}&0&I}\ppmatrix{\hat{v}_{S_1}\\ - \hat{i}_{S_2"}\\ - \hat{i}_{S_1"}\\\hat{v}_{S_2}}=\ppmatrix{0\\0},\mbox{i.e.,} \ppmatrix{-g^*_{11}&h^*_{21}&I&0\\h^*_{12}&-r^*_{22}&0&I}\ppmatrix{\hat{v}_{S_1}\\ \hat{i}_{S_2"}\\ \hat{i}_{S_1"}\\\hat{v}_{S_2}} =  \ppmatrix{0\\0}.
\end{align}
The individual devices which are present  will  usually have very few  ports.
 Building their adjoints is therefore computationally inexpensive. 
\\
1. Let $v=0,i=0$ be a nullator. The adjoint will have its voltage and current 
unconstrained and is therefore a norator.
\\
2. Let $v_S= Zi_{S"}$ be a (multiport)   impedance. 
The adjoint has the characteristic $\hat{v}_S=Z^*\hat{i}_{S"}.$
\\3. Consider the current controlled voltage source (CCVS)
and the  voltage  controlled current source (VCCS)
shown below.
\begin{align}
\label{eqn:controlledsourcesp}
\ppmatrix{1&0&0&0\\
0&1&-r&0}\ppmatrix{v_1\\v_2\\i_1\\i_2}=\ppmatrix{0\\0};
\ppmatrix{0&0&1&0\\
-g&0&0&1}\ppmatrix{v_1\\v_2\\i_1\\i_2}=\ppmatrix{0\\0};
\end{align}
The adjoints are respectively (built by first building the orthogonal dual
of the source zero characteristic,
interchanging current and voltage variables and  changing the sign of the current variables)
\begin{align}
\label{eqn:controlledsources2p}
\ppmatrix{0&-r&1&0\\
0&0&0&1}\ppmatrix{\hat{i}_1\\\hat{i}_2\\\hat{v}_1\\\hat{v}_2}=\ppmatrix{0\\0};
\ppmatrix{1&0&0&-g\\
0&1&0&0}\ppmatrix{\hat{i}_1\\\hat{i}_2\\\hat{v}_1\\\hat{v}_2}=\ppmatrix{0\\0}.
\end{align}
Thus the adjoints of  CCVS, VCCS remain  CCVS,VCCS respectively,
with the same parameters $r,g$ respectively,
but the direction of control which was originally from port $1$ to
port $2$ is now  from port $2$ to
port $1.$
\\
4. Next consider  current  controlled current source (CCCS)
and the  voltage controlled voltage source (VCVS)
shown below.
\begin{align}
\label{eqn:controlledsources3p}
\ppmatrix{1&0&0&0\\
0&0&-\alpha &1}\ppmatrix{v_1\\v_2\\i_1\\i_2}=\ppmatrix{0\\0};
\ppmatrix{0&0&1&0\\
-\beta &1&0&0}\ppmatrix{v_1\\v_2\\i_1\\i_2}=\ppmatrix{0\\0}.
\end{align}
The adjoints are respectively,
\begin{align}
\label{eqn:controlledsources4p}
\ppmatrix{1&0&0&0\\
0&0&1&\alpha}\ppmatrix{\hat{i}_1\\\hat{i}_2\\\hat{v}_1\\\hat{v}_2}=\ppmatrix{0\\0};
\ppmatrix{0&0&1&0\\
1&\beta &0&0}\ppmatrix{\hat{i}_1\\\hat{i}_2\\\hat{v}_1\\\hat{v}_2}=\ppmatrix{0\\0}.
\end{align}
Thus the adjoints of  CCCS, VCVS are source zero VCVS,CCCS respectively,
but the direction of control which was originally from port $1$ to
port $2$ is now from port $2$ to
port $1.$ Also the current gain factor is now the negative of the voltage gain factor
in the CCCS to VCVS case and vice versa in VCVS to CCCS case.
\end{example}
Let the multiport $\mnw{{\N}_P}$ be on graph $\G_{SP}$
with device characteristic
$\A_{SS"}=x_{SS"}+{\V}^{}_{SS"}$ on $S.$
The  multiport $\mnw{{\N}^{hom}_P}$ is on graph $\G_{SP}$
but has device characteristic ${\V}^{}_{SS"}.$
The \nw{adjoint} $\N^{adj}_P$ of ${\N}_P$ as well as of ${\N}^{hom}_P$ 
is on graph $\G_{SP}$
but has device characteristic ${\V}^{adj}_{SS"}.$
%
%

Let the solution set of $\N_P$
be
$[\V_{SP}\oplus (\V^{*}_{SP})_{S"P"}]\cap \A_{SS"}.$
Then the solution set of  ${\N}^{hom}_P$
would be
$[\V_{SP}\oplus (\V^{*}_{SP})_{S"P"}]\cap {\V}^{}_{SS"},
$
and that of  $\N^{adj}_P$
would be
$[\V_{SP}\oplus (\V^{*}_{SP})_{S"P"}]\cap {\V}^{adj}_{SS"}.$

The port behaviour of $\N_P$
would be $[\V_{SP}\oplus (\V^{*}_{SP})_{S"(-P")}]\lrar \A_{SS"},$
that of 
${\N}^{hom}_P$
would be\\
$[\V_{SP}\oplus (\V^{*}_{SP})_{S"(-P")}]\lrar {\V}^{}_{SS"},
$
and that of  $\N^{adj}_P$
would be
$[\V_{SP}\oplus (\V^{*}_{SP})_{S"(-P")}]\lrar {\V}^{adj}_{SS"}.$


We now have the following basic result on linear multiports \cite{HNarayanan1997,HNarayananadjoint}.
It essentially states that, 
$\N^{hom}_P$ and $\N^{adj}_P$ have adjoint port behaviours.
\begin{theorem}
\label{thm:adjointmultiport}
Let $\N^1_P,{\N}^2_P$ be multiports on the same graph $\G_{SP}$
but with device characteristics \\
$\V^1_{SS"}, {\V}^2_{SS"},$ respectively
and port behaviours $\breve{\V}^1_{PP"}, \breve{\V}^2_{PP"},$ respectively.
Then if $\V^1_{SS"}, {\V}^2_{SS"},$ are adjoints  of each other
so are $\breve{\V}^1_{PP"}, \breve{\V}^2_{PP"},$ adjoints  of each other.
\end{theorem}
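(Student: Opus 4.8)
The plan is to realise both port behaviours as matched compositions of one fixed, purely topological vector space with the respective device characteristics, apply the Implicit Duality Theorem, and then reconcile the relabellings that are built into the definition of the adjoint. Set $\Vsp\equivd \V^v(\G_{SP})$ (the KVL space), so that $(\Vsp^*)_{S"P"}=\V^i(\G_{SP})$ by Tellegen's Theorem (Theorem~\ref{thm:tellegen}), and write $\T\equivd \Vsp\oplus (\Vsp^*)_{S"(-P")}$ for the graph-only factor appearing in the port-behaviour formula recalled just before the statement. Then $\breve{\V}^1_{PP"}=\T\lrar \V^1_{SS"}$ and, when $\V^2_{SS"}=(\V^1_{SS"})^{adj}$, $\breve{\V}^2_{PP"}=\T\lrar (\V^1_{SS"})^{adj}$, each matched composition being taken over $S\uplus S"$.

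Applying the Implicit Duality Theorem (Theorem~\ref{thm:idt0}) in its ``in particular'' form --- legitimate since $(\V^1_{SS"})^*$ is a vector space on the whole composed set $S\uplus S"$, so the skewed composition collapses to a matched one --- gives $(\breve{\V}^1_{PP"})^* \equaln \T^*\lrar (\V^1_{SS"})^*$. Using $(\K_X\oplus\K_Y)^*=\K_X^*\oplus\K_Y^*$, that orthogonal complementation commutes with relabelling copies and with the sign change $(-Y)$, and $(\Vsp^*)^*=\Vsp$ (Theorem~\ref{thm:perpperp}), one gets $\T^*=\Vsp^*\oplus (\Vsp)_{S"(-P")}$. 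Now $(\breve{\V}^1_{PP"})^{adj}=\big(\T^*\lrar (\V^1_{SS"})^*\big)_{(-P")P}$, and since $P,P"$ do not take part in the composition the relabelling $(-P")P$ may be pushed into $\T^*$; a direct computation on its two summands yields
\[
(\breve{\V}^1_{PP"})^{adj}\equaln \big[(\Vsp^*)_{S(-P")}\oplus (\Vsp)_{S"(-P)}\big]\lrar (\V^1_{SS"})^*.
\]
On the other side, $\breve{\V}^2_{PP"}=\T\lrar \big((\V^1_{SS"})^*\big)_{(-S")S}$; pushing the relabelling ``$(-S")S$'' through the matched composition, from $(\V^1_{SS"})^*$ onto $\T$ (legitimate because the spaces involved are closed under negation), and computing its effect on the two summands of $\T$, gives
\[
\breve{\V}^2_{PP"}\equaln \big[(\Vsp^*)_{S(-P")}\oplus (\Vsp)_{(-S")P}\big]\lrar (\V^1_{SS"})^*.
\]

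The two displayed expressions differ only in $(\Vsp)_{S"(-P)}$ versus $(\Vsp)_{(-S")P}$, and these are the \emph{same} subspace, since $\Vsp$ is a vector space and hence closed under negation, so a single sign change may be transferred from one copy to the other. Therefore $(\breve{\V}^1_{PP"})^{adj}=\breve{\V}^2_{PP"}$, and since the adjoint of a vector space is an involution (as recorded in Subsection~\ref{sec:adjoint}) the relation ``$\breve{\V}^1_{PP"}$ and $\breve{\V}^2_{PP"}$ are adjoints of each other'' is symmetric, which is the assertion. The one genuinely delicate point --- and the main place to slip --- is the sign bookkeeping: tracking, through the relabellings $(-P")P$ and $(-S")S$ and the push-through, which copy of each variable lands on which index set with which sign, and then recognising that the two sign patterns that emerge describe the same vector space; beyond that, only Theorems~\ref{thm:perpperp}, \ref{thm:idt0} and~\ref{thm:tellegen} are used.
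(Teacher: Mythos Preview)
Your proof is correct and follows essentially the same route as the paper: write the port behaviour as a matched composition of the topological space with the device characteristic, apply the Implicit Duality Theorem (Theorem~\ref{thm:idt0}) in its ``in particular'' form, and then chase the relabellings $(-S")S$ and $(-P")P$ through. The paper organises the bookkeeping slightly more cleanly by working first with the unsigned space $\V_{SPS"P"}\equivd \Vsp\oplus(\Vsp^*)_{S"P"}$ and observing once and for all that $\V_{SPS"P"}^{adj}=\V_{SPS"P"}$ (the topological space is self-adjoint); this absorbs all the sign tracking into a single line, after which the identity $(\V^1_{PP"})^{adj}=\V^2_{PP"}$ drops out directly, and the final $P(-P")$ is applied only at the very end to pass to $\breve{\V}^i_{PP"}$. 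Your version builds the $(-P")$ into $\T$ from the start, which forces you to carry the signs through both computations and match them at the end --- correct, but exactly the ``delicate point'' you flag.
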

\begin{proof}
 Let $\V_{SP}\equivd \V^v(\G_{SP}).$
By  Theorem \ref{thm:tellegen}, we have $((\V^v(\G_{SP}))^{*})_{S"P"}=\V^i(\G_{SP}).$\\
Therefore $(\V_{SP}^{*})_{S"P"}=\V^i(\G_{SP}).$
Let $\V_{SPS"P"}\equivd\V_{SP}\oplus (\V_{SP}^{*})_{S"P"}.$
It is clear that $\V_{SPS"P"}^{adj}$\\$\equivd (\V_{SPS"P"}^{*})_{(-S")(-P")SP}=
\V_{SPS"P"}.$
%
 Let $ \V^1_{PP"}\equivd \V_{SPS"P"}\lrar \V^1_{SS"}$
and let $ {\V^2}_{PP"}\equivd {\V}_{SPS"P"}\lrar {\V}^2_{SS"}.$
\\
We then have,
$ (\V^1_{PP"})^{adj}\equivd (\V^1_{PP"})^{*}_{(-P")P}= (\V_{SPS"P"}\lrar \V_{SS"})^{*}_{(-P")P}= 
(\V_{SPS"P"}^{*}\lrar \V_{SS"}^{*})_{(-P")P}$\\
$=(\V_{SPS"P"}^{*})_{S(-P")S"P}\lrar \V_{SS"}^{*}$
$=(\V_{SPS"P"}^{*})_{(-S")(-P")SP}\lrar  (\V_{SS"}^{*})_{(-S")S}
={\V}_{SPS"P"}\lrar {\V^2}_{SS"}={\V^2}_{PP"} .$
%
%
Thus  $\V^1_{PP"},{\V}^2_{PP"}$ are adjoints and therefore
$\breve{\V}^1_{PP"}\equivd (\V^1_{PP"})_{P(-P")}$ and
$\breve{\V}^2_{PP"}\equivd ({\V}^2_{PP"})_{P(-P")}$ are adjoints.
\end{proof}
\begin{corollary}
\label{cor:adjointmultiport}
Let $\N_P\equivd (\Gsp,\A_{SS"}),$ with $\V_{SS"}$ as the vector space 
translate of $\A_{SS"}.$
Suppose $\breve{\A}_{PP"}$ is the (nonvoid) port behaviour of $\N_P$
with $\breve{\V}_{PP"}$ as its  vector space
translate. 
Then $\N^{adj}_P\equivd (\Gsp,\V^{adj}_{SS"}),$ has the port behaviour 
$\breve{\V}^{adj}_{PP"}.$ 
\end{corollary}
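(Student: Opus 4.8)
The plan is to obtain the corollary by chaining Theorem \ref{thm:translatemultiport} and Theorem \ref{thm:adjointmultiport} together; no new computation is needed, only a careful matching of the objects in those two statements to $\N_P$, $\N^{hom}_P$ and $\N^{adj}_P$.

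First I would apply Theorem \ref{thm:translatemultiport} to the pair $\N^1_P\equivd\N_P$ (device characteristic $\A_{SS"}$) and $\N^2_P\equivd\N^{hom}_P$ (device characteristic $\V_{SS"}$, the vector space translate of $\A_{SS"}$). Since $\breve{\A}_{PP"}$ is nonvoid by hypothesis, that theorem gives that the port behaviour of $\N^{hom}_P$ is exactly $\breve{\V}_{PP"}$, the vector space translate of $\breve{\A}_{PP"}$.

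Next I would use that $\V_{SS"}$ and $\V^{adj}_{SS"}$ are adjoints of each other — the relation of being adjoints of vector spaces is involutive, since $(\V^{adj}_{SS"})^{adj}=\V_{SS"}$, as recorded just before Example \ref{eg:egadjoint1}. Hence $\N^{hom}_P$ and $\N^{adj}_P$ are multiports on the common graph $\Gsp$ whose device characteristics $\V_{SS"}$ and $\V^{adj}_{SS"}$ are adjoints of each other. Applying Theorem \ref{thm:adjointmultiport} with $\N^1_P\equivd\N^{hom}_P$ and $\N^2_P\equivd\N^{adj}_P$, their port behaviours are adjoints of each other. By the previous paragraph the port behaviour of $\N^{hom}_P$ is $\breve{\V}_{PP"}$, so the port behaviour of $\N^{adj}_P$ is $(\breve{\V}_{PP"})^{adj}=\breve{\V}^{adj}_{PP"}$, which is the claim.

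The argument is therefore a short composition of earlier results, and the only thing that needs a moment's attention — the nearest approximation to an obstacle — is the bookkeeping: checking that the adjoint relation is symmetric, so that Theorem \ref{thm:adjointmultiport} may be invoked in the direction from $\N^{hom}_P$ to $\N^{adj}_P$, and that Theorem \ref{thm:translatemultiport} delivers the port behaviour of $\N^{hom}_P$ as literally $\breve{\V}_{PP"}$ rather than some merely isomorphic space. Both are immediate from the definitions, so no real difficulty arises.
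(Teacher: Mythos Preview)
Your proposal is correct and follows essentially the same route as the paper: first invoke Theorem \ref{thm:translatemultiport} to identify the port behaviour of $\N^{hom}_P$ with $\breve{\V}_{PP"}$, then apply Theorem \ref{thm:adjointmultiport} to the pair $\N^{hom}_P,\N^{adj}_P$. The paper phrases the middle step as ``the adjoint of $\N_P$ and $\N^{hom}_P$ are the same'' rather than citing involutivity of the adjoint, but this is the same observation.
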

\begin{proof}
By Theorem \ref{thm:translatemultiport}, 
port behaviour of $\N^{hom}_P\equivd (\Gsp,\V_{SS"}),$
is the vector space translate of $\breve{\A}_{PP"}.$
%
By definition, the adjoint of $\N_P$ and $\N^{hom}_P$ are the same.
The result now follows from Theorem \ref{thm:adjointmultiport}, noting 
that $(\breve{\V}^{adj}_{PP"})_{P(-P")}$ is the  
 adjoint of $(\breve{\A}_{PP"})_{P(-P")}.$
\end{proof}
\begin{remark}
\label{rem:adj1}
1. Suppose the original port behaviour $\A_{PP"}$ is defined by
$v_P=Zi_P+E.$ Its vector space translate $\V_{PP"}$ is defined by
$v_P=Zi_P.$ i.e., by \begin{align} (I|-Z)\ppmatrix{v_P\\i_{P"}}=0.\end{align}
$\V^{adj}_{PP"}$ is defined by \begin{align} (I|-Z^*)\ppmatrix{v_P\\i_{P"}}=0,\end{align} i.e., by $v_P=Z^*i_P.$
Now let every device in the multiport have the form $v_{S_j}=Z_ji_{S_{j}"}+E_j$
and, further, at the ports let the behaviour be $v_P=Zi_P+E.$ 
Theorem \ref{thm:adjointmultiport} implies that if every device is replaced by
$v_{S_j}=Z^*_ji_{S_{j}"},$ at the ports the behaviour would be $v_P=Z^*i_P.$
But, as the theorem indicates, the idea of the adjoint works well even if we have only a relationship
of the kind $Bv_P-Qi_P=s.$ 

2. The devices in a multiport have few device ports. So building their
adjoints is easy. 
Therefore $\N^{adj}_P$ can be built essentially in linear time.
Thus we can ``implicitly'' build, essentially in linear time, the adjoint of the port behaviour of $\N_P.$ We use ``implicitly'' because both the 
port behaviours are available not as explicit equations, but in terms of multiports.
This fact is exploited subsequently to generalize Thevenin-Norton and maximum 
power transfer theorems.
\end{remark}
%

\section{Generalization of Thevenin-Norton Theorem}
\label{sec:computingbehaviour}
The characteristic feature of the Thevenin-Norton Theorem is that 
it is in terms of repeated solution of networks obtained by suitable 
termination of a given multiport. These networks are assumed to have 
 unique solutions.
Our generalization of the Thevenin-Norton theorem is in accordance 
with this feature and requires the notion 
of a rigid multiport. If a multiport is not rigid, it cannot be a part 
of a network with unique solution and therefore we cannot use a conventional simulator to solve a network
 of which the multiport is a part.
A rigid multiport 
can permit 
any nonvoid port behaviour at its ports. 
 {\it Therefore our generalization of the Thevenin-Norton theorem in terms of rigid multiports is the best that is possible if 
conventional circuit simulators are to be used to compute the port behaviour.}
Our technique is to terminate a rigid multiport by its adjoint through 
a gyrator. We therefore need to prove that the adjoint of a rigid multiport is rigid. We need the development in the next subsection for this purpose.
\subsection{Rigid  Multiports}
\begin{definition}
\label{def:regular}
Let multiport $\N_P\equivd (\G_{SP}, \A_{S'S"}), $
where
 $\A_{S'S"}
=\alpha_{S'S"}+\V_{S'S"}.$\\
The multiport $\N_P$ is said to be \nw{rigid
} iff every  multiport $\hat{\N}_P\equivd (\G_{SP}, \hat{\A}_{S'S"}), $
 where $\hat{\A}_{S'S"}=\hat{\alpha}_{S'S"}+\V_{S'S"},$
has a non void set of
solutions
and has a unique solution corresponding to every vector in its multiport behaviour.
\\Let $\A_{AB},\A_{B}$ be affine spaces on sets $A\uplus B, B,$ respectively, $A,B,$ disjoint.
Further, let $\A_{AB},\A_{B}$ have vector  space translates $\V_{AB},\V_{B},$
respectively.
\\
We say the pair $\{\A_{AB},\A_{B}\}$ has the \nw{full sum property} iff
$\V_{AB}\circ B+\V_{B}=\F_B.$\\
We say the pair $\{\A_{AB},\A_{B}\}$ has the \nw{zero intersection property} iff
$\V_{AB}\times B\cap \V_{B}=\0_B.$\\
We say that the pair $\{\A_{AB},\A_{B}\}$ is \nw{rigid,
} iff  it has the full sum property and the zero intersection property.
\end{definition}
Multiport rigidity reduces to affine space pair rigidity and the adjoint of a rigid multiport is also rigid. 
\begin{theorem}
\label{thm:regularrecursive}
Let $\{\A_{AB},\A_{B}\}$ be a rigid
 pair and let $\V_{AB},\V_{B}$ be the vector space translates of $\A_{AB},\A_{B},$ respectively. Then
\begin{enumerate}
\item The full sum property  of $\{\A_{AB},\A_{B}\}$ is equivalent to
 $\hat{\A}_{AB}\lrar \hat{\A}_{B}$ being nonvoid, whenever
$\V_{AB},\V_{B}$ are the vector space translates of $\hat{\A}_{AB}, \hat{\A}_{B},$ respectively.
Further,
 when the full sum property holds for $\{\A_{AB},\A_{B}\},$  $\hat{\A}_{AB}\lrar \hat{\A}_{B}$  has  vector space translate $\V_{AB}\lrar \V_{B}.$
\item The zero intersection  property of $\{\A_{AB},\A_{B}\}$ is equivalent to
the statement that , \\if $f_A\in \A_{AB}\lrar \A_{B}$ and $(f_A,f_B), (f_A,f'_B)\in \A_{AB}\cap \A_{B},$
then $f_B=f'_B.$
\item The pair $\{\A_{AB},\A_{B}\}$ has the zero intersection (full sum) property iff $\{\V^{*}_{AB},\V^{*}_{B}\}$ has the full sum (zero intersection) property.
Therefore
 $\{\A_{AB},\A_{B}\}$ is {rigid
} iff
$\{\V^{*}_{AB},\V^{*}_{B}\}$ is rigid.
\item A multiport $\N_P$ is rigid iff  $\N^{hom}_P$ and $\N^{adj}_P$
are rigid.
\item A network $\N\equivd (\G_S,\A_{SS"}),$ where $\A_{SS"}$ is proper, has 
a unique solution iff $\N^{hom}\equivd(\G_S,\V_{SS"}),$ 
where $\V_{SS"}$ is the vector space translate of $\A_{SS"},$
has a unique solution.

\end{enumerate}
\end{theorem}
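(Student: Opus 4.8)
The plan is to establish the five items in turn, using items~1--3 --- general characterizations that hold for any affine-space pair --- as the basis for items~4 and~5. For item~1, I would observe that $\hat{\A}_{AB}\lrar\hat{\A}_B=(\hat{\A}_{AB}\cap\hat{\A}_B)\circ A$ is nonvoid iff $\hat{\A}_{AB}\cap\hat{\A}_B$ is, iff the affine space $\hat{\A}_{AB}\circ B$ (vector space translate $\V_{AB}\circ B$) meets $\hat{\A}_B$ (translate $\V_B$); and two affine spaces $c_B+\V_{AB}\circ B$ and $d_B+\V_B$ meet iff $c_B-d_B\in\V_{AB}\circ B+\V_B$. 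So the full sum property makes the composition nonvoid for every choice of shifts, whereas if it fails one chooses $\hat{\A}_{AB},\hat{\A}_B$ with $c_B-d_B\notin\V_{AB}\circ B+\V_B$ to make it void; and once nonvoid, Theorem~\ref{thm:IIT2} gives $\hat{\A}_{AB}\lrar\hat{\A}_B=\alpha_A+(\V_{AB}\lrar\V_B)$, which is the additional clause. For item~2, subtracting two elements $(f_A,f_B),(f_A,f'_B)$ of $\A_{AB}\cap\A_B$ shows $(0_A,f_B-f'_B)\in\V_{AB}$ and $f_B-f'_B\in\V_B$, i.e.\ $f_B-f'_B\in(\V_{AB}\times B)\cap\V_B$; conversely $\A_{AB}\cap\A_B$ is nonvoid (the rigid pair has the full sum property, by item~1), and adding a nonzero element of $(\V_{AB}\times B)\cap\V_B$ to a member of $\A_{AB}\cap\A_B$ produces a second member with the same $A$-component. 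For item~3, I would combine part~2 of Theorem~\ref{thm:dotcrossidentity}, $\V_{AB}^{*}\circ B=(\V_{AB}\times B)^{*}$ and $\V_{AB}^{*}\times B=(\V_{AB}\circ B)^{*}$, with the identities $(\W_1+\W_2)^{*}=\W_1^{*}\cap\W_2^{*}$ and $(\W_1\cap\W_2)^{*}=\W_1^{*}+\W_2^{*}$ from the proof of Theorem~\ref{thm:idt0}: then $\V_{AB}^{*}\circ B+\V_B^{*}=\bigl((\V_{AB}\times B)\cap\V_B\bigr)^{*}$, which is $\F_B$ iff $(\V_{AB}\times B)\cap\V_B=\0_B$, and $(\V_{AB}^{*}\times B)\cap\V_B^{*}=(\V_{AB}\circ B+\V_B)^{*}$, which is $\0_B$ iff $\V_{AB}\circ B+\V_B=\F_B$; reading these together gives that $\{\A_{AB},\A_B\}$ is rigid iff $\{\V_{AB}^{*},\V_B^{*}\}$ is.

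For item~4, I would first make precise the reduction announced just before the theorem. Put $\V_{SPS"P"}\equivd\V^v(\G_{SP})\oplus(\V^i(\G_{SP}))_{S"P"}$, let $\V_{SS"}$ be the vector space translate of the device characteristic $\A_{SS"}$, and take $B\equivd S\uplus S"$ and $A\equivd P\uplus P"$. Projecting the solution set $\V_{SPS"P"}\cap\A_{SS"}$ onto $S\uplus S"$, nonvoidness of solutions for every internal source value is equivalent to $\V_{SPS"P"}\circ B+\V_{SS"}=\F_B$, the full sum property of $\{\V_{SPS"P"},\A_{SS"}\}$, while item~2 identifies ``the port condition determines the internal condition uniquely'' with the zero intersection property of $\{\V_{SPS"P"},\A_{SS"}\}$. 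Hence $\N_P$ is rigid iff $\{\V_{SPS"P"},\A_{SS"}\}$ is a rigid pair, and since pair rigidity depends only on vector space translates, $\N_P$ is rigid iff $\N^{hom}_P$ is. To bring in $\N^{adj}_P$, apply item~3 to $\{\V_{SPS"P"},\V_{SS"}\}$ to get rigidity of $\{(\V_{SPS"P"})^{*},\V_{SS"}^{*}\}$, then apply the relabelling $\tau\equivd(\cdot)_{(-S")(-P")SP}$: it preserves the pair structure (sending $B$-coordinates to $B$-coordinates and $A$-coordinates to $A$-coordinates), hence preserves pair rigidity, since restriction, contraction, sum, intersection and $\F$ all commute with relabellings and sign changes; by the self-adjointness of $\V_{SPS"P"}$ recorded in the proof of Theorem~\ref{thm:adjointmultiport} it sends $(\V_{SPS"P"})^{*}$ to $\V_{SPS"P"}$; and its restriction to $S\uplus S"$ is $(\cdot)_{(-S")S}$, so it sends $\V_{SS"}^{*}$ to $\V^{adj}_{SS"}$. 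Therefore $\{\V_{SPS"P"},\V^{adj}_{SS"}\}$ is rigid, i.e.\ $\N^{adj}_P$ is rigid. I expect item~4 to be the main obstacle: no single step is deep, but one must set the multiport-to-pair reduction up with the correct roles of $A$ and $B$ and then verify that the \emph{single} relabelling $\tau$ simultaneously carries $(\V_{SPS"P"})^{*}$ to $\V_{SPS"P"}$ and $\V_{SS"}^{*}$ to $\V^{adj}_{SS"}$.

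For item~5, put $\V^{top}\equivd\V^v(\G_S)\oplus(\V^i(\G_S))_{S"}$. By Tellegen's Theorem~\ref{thm:tellegen}, $\V^i(\G_S)=((\V^v(\G_S))^{*})_{S"}$, so Theorem~\ref{thm:perpperp} gives $r(\V^{top})=r(\V^v(\G_S))+r((\V^v(\G_S))^{*})=|S|$. The solution set of $\N$ is the affine space $\V^{top}\cap\A_{SS"}$, with vector space translate $\V^{top}\cap\V_{SS"}$. If $\N^{hom}$ has a unique solution then $\V^{top}\cap\V_{SS"}=\0_{SS"}$; properness gives $r(\V^{top})+r(\V_{SS"})=2|S|=|S\uplus S"|$, so $\V^{top}$ and $\V_{SS"}$ are complementary subspaces of $\F_{SS"}$, whence $\V^{top}\cap\A_{SS"}$ is nonvoid and, being zero-dimensional, is a single point. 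Conversely, a unique solution of $\N$ forces its translate $\V^{top}\cap\V_{SS"}$ to equal $\0_{SS"}$, which is the unique solution of $\N^{hom}$.
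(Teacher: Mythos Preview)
Your proposal is correct and follows essentially the same route as the paper. Items~1--3 are argued identically (nonvoidness via $\V_{AB}\circ B+\V_B$, uniqueness via $(\V_{AB}\times B)\cap\V_B$, and duality via Theorem~\ref{thm:dotcrossidentity} and the sum/intersection identities); for item~4 the paper also reduces to pair rigidity, applies item~3, and then relabels --- you are simply more explicit about the single map $\tau=(\cdot)_{(-S")(-P")SP}$ doing both jobs at once; and for item~5 the paper phrases the same dimension count as nonsingularity of the $2|S|\times2|S|$ coefficient matrix, which is exactly your complementarity argument in matrix language.
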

\begin{proof}
1.
Let $\hat{\A}_{AB}=(\alpha_A,\alpha_B)+\V_{AB}, \hat{\A}_{B}=\beta_{B}+\V_{B}.$
\\By the definition of matched composition, $\hat{\A}_{AB}\lrar \hat{\A}_{B}$ is nonvoid iff $\hat{\A}_{AB}\cap \hat{\A}_{B}$ is nonvoid,\\ i.e., iff $\hat{\A}_{AB}\circ B\ \cap\  \hat{\A}_{B}$ is nonvoid,\\
i.e., iff there exist $\lambda_B\in \V_{AB}\circ B, \sigma_B\in \V_{B},$
such that $\alpha_B+\lambda_B=\beta_B+\sigma_B,$ i.e., 
such that $\alpha_B-\beta_B=\sigma_B-\lambda_B.$\\
Clearly when $\V_{AB}\circ B+\V_{B}=\F_B,$
there exist $\lambda_B\in \V_{AB}\circ B, \sigma_B\in \V_{B},$
such that $\alpha_B-\beta_B=\sigma_B-\lambda_B$
 so that $\hat{\A}_{AB}\lrar \hat{\A}_{B}$ is nonvoid.
\\
On the other hand, if $\V_{AB}\circ B+\V_{B}\ne \F_B,$
there exist $\alpha_B,\beta_B$ such that $\alpha_B-\beta_B \notin\V_{AB}\circ B+\V_{B},$ so that $\hat{\A}_{AB}\cap \hat{\A}_{B}$ and therefore
$\hat{\A}_{AB}\lrar \hat{\A}_{B}$ is void.\\
By Theorem \ref{thm:IIT2}, if $\hat{\A}_{AB}\lrar \hat{\A}_{B}$ is nonvoid, its vector space translate is $\V_{AB}\lrar \V_{B}.$

2.
Let $(\V_{AB}\times B)\cap \V_{B}=\0_B.$
If $f_A\in \A_{AB}\lrar \A_{B}$ and $(f_A,f_B), (f_A,f'_B)\in \A_{AB}\cap \A_{B},$
then\\ $(0_A,(f_B-f'_B))\in \V_{AB},$ and  similarly $(f_B-f'_B)\in \V_{B},$ so that $(f_B-f'_B)\in (\V_{AB}\times B)\cap \V_{B}=\0_B.$
\\
Next suppose whenever $f_A\in \A_{AB}\lrar \A_{B}$ and $(f_A,f_B), (f_A,f'_B)\in \A_{AB}\cap \A_{B},$       we have $f_B=f'_B.$
Suppose $(\V_{AB}\times B) \cap \V_{B}\ne \0_B.$
Let $g_B\in (\V_{AB}\times B) \cap \V_{B}$ and let $g_B\ne 0_B.$
\\If $f_A\in \A_{AB}\lrar \A_{B},$ then there exists $f_B$ such that  $(f_A,f_B) \in \A_{AB}$ and
$f_B \in  \A_{B}.$
Clearly\\ $(f_A,f_B)+(0_A,g_B)=(f_A,f_B+g_B) \in \A_{AB}$ and
 $(f_B + g_B) \in  \A_{B},$
so that we must have $(f_A,f_B+g_B) \in \A_{AB}\cap \A_{B}.$
But this means $f_B=f_B+g_B,$
a contradiction.
We conclude that $(\V_{AB}\times B)\cap \V_{B}=\0_B.$

 3. We have, 
$(\V_{AB}\circ B+\V_{B})^{*}=\V_{AB}^{*}\times B\ \cap \ \V_{B}^{*}=\F^{*}_B= \0_B,$
and\\
$(\V_{AB}\times B\ \cap\ \V_{B})^{*}=\V_{AB}^{*}\circ B+ \V_{B}^{*}=\0^{*}_B= \F_B.$

4. By part 1 and 2 above, the rigidity of $\N_P\equivd (\Gsp,\A_{SS"})$ 
is equivalent to the rigidity of\\ $(\V^v(\Gsp)\oplus (\V^i(\Gsp))_{S"P"},\V_{SS"}).$ By part 3, the rigidity of the latter is equivalent to the rigidity of 
\\$((\V^v(\Gsp)\oplus (\V^i(\Gsp))_{S"P"})^*, \V^*_{SS"}),$
i.e., to the rigidity of $((\V^i(\Gsp))_{SP}\oplus (\V^v(\Gsp))_{S"P"}, \V^*_{SS"})$
\\(using Theorem \ref{thm:tellegen}), i.e., to the rigidity of 
$\N_P^{adj}\equivd (\V^v(\Gsp)\oplus (\V^i(\Gsp))_{S"P"}, (\V^*_{SS"})_{(-S")S}).$

5. Let $\N\equivd (\G_S,\A_{SS"}),\N^{hom}\equivd (\G_S,\V_{SS"}).$
We have  $r(\V^v(\G_S)\oplus (\V^i(\G_S))_{S"})=|S|$ (using Theorems \ref{thm:perpperp}, \ref{thm:tellegen}) and $r(\V_{SS"})=|S|.$
Both $\N$ as well as $\N^{hom}$ have the same $2|S|\times 2|S|$ coefficient matrix of the network
 equations. The networks have  unique solutions iff the matrix is nonsingular.
\end{proof}

\subsection{Terminating a multiport by its adjoint through a gyrator}
A conventional linear circuit simulator can process only linear circuits with proper
device characteristics. Firther, unless the circuit has a unique solution,
the simulator will return an error message.
A useful artifice for processing a multiport through a conventional 
circuit simulator,
 is to terminate it appropriately so that the 
resulting network, if it has a solution, has a unique solution. This solution would also 
contain a solution to the original multiport.
We now describe this technique in detail.

Let $K,K"$ be representative matrices of vector spaces $\V,\V^*,$ respectively.
Consider the equation
\begin{align}
\label{eqn:uniquesol1}
\ppmatrix{K\\
K^{"}}\ppmatrix{
x}& =  \ppmatrix{s\\0}.
\end{align}

First note that the coefficient matrix, in Equation \ref{eqn:uniquesol1},
has number of rows equal to $r(\V)+r(\V^*)$ (Theorem \ref{thm:perpperp}) which is the 
number of columns of the matrix. Next, suppose the rows are linearly 
dependent. This would imply that a non trivial linear combination of
the rows is the zero vector, which, since the rows of $K,K"$ are linearly independent, in turn implies that a nonzero vector
$x,$ lies in the intersection of complementary orthogonal complex vector spaces, i.e., satisfies  $\langle x,x \rangle=0,$ a contradiction. 
Thus, if a matrix is made up of  two sets of rows, 
which are representative matrices of complementary orthogonal
vector spaces,
then it must be nonsingular.
Therefore, the coefficient matrix in Equation \ref{eqn:uniquesol1}
is nonsingular.

Let the multiport behaviour $\breve{\A}_{PP"}$ be the solution space 
of the equation $Bv_{P}-Qi_{P"}=s,$ with linearly independent 
rows and let $\breve{\V}_{PP"}$ 
be the solution space of the equation $Bv_{P}-Qi_{P"}=0.$
 Let the dual multiport behaviour $\breve{\V}^{*}_{PP"}$ be the solution
space of equation $B^{"}v_{P}-Q^{"}i_{P"}=0,$
where the rows of $(B^{"}|-Q^{"})$
form a basis for the space complementary orthogonal to the 
row space of  $(B|-Q).$

The constraints of the two multiport behaviours together give the following
equation.
\begin{align}
\label{eqn:primaldual}
\ppmatrix{
        B  &\vdots &  -Q \\
        B^{"} & \vdots &       -Q^{"}}\ppmatrix{v_{P}\\i_{P"}}&=\ppmatrix{s\\ 0}. 
\end{align}
The first and second set of rows of the coefficient matrix of the above 
equation are  linearly independent and span complementary orthogonal spaces. Therefore the coefficient
matrix is invertible and the equation has a unique solution.
Now suppose we manage to terminate $\N_P$ by another multiport $\N^2_{{P}}$
in such a way that the port voltage and current vectors of $\N_P$ satisfy an equation of the kind  (\ref{eqn:primaldual}) above. Then the port voltage and current 
vectors of $\N_P$ would be unique and would uniquely determine the port voltage and current 
vectors of $\N^2_{{P}}.$ If  both multiports are rigid,
this would also uniquely determine the internal voltage and current vectors of
 both $\N_P$ and $\N^2_{{P}},$ which means that the network, obtained 
by terminating $\N_P$ by $\N^2_{{P}},$
 has a unique solution. We show below that we can build $\N^2_{{P}}$
by first building $\N^{adj}_{\tilde{P}}$ and attaching $1:1$ gyrators to its ports.

Let  $\mathcal{N}_P$ be a rigid multiport on graph $\G_{SP}$
and device characteristic ${\A}_{SS"},$ and let it have the port behaviour 
$\breve{\A}_{PP"}.$  
Let $\mathcal{N}^{adj}_{\tilde P}$ be on the copy $\G_{\tilde{S}\tilde{P}}$ of $\G_{SP},$ with device characteristic $({\V}^{adj}_{SS"})_{\tilde{S}\tilde{S}"}.$
By the definition of rigidity, 
the port behaviour $\breve{\A}_{PP"}$ is nonvoid and by Theorem  \ref{thm:regularrecursive}, $\mathcal{N}^{adj}_{\tilde P}$ is also
rigid.
We will now show that the network $[\mathcal{N}_P\oplus \mathcal{N}^{adj}_{\tilde P}]\cap \g^{{P}\tilde{P}}$
 has a unique solution.

 Let ${\V}_{SS"}$  be the vector space translate of 
${\A}_{SS"}$ and $\breve{\V}_{PP"}$  be that  of 
$\breve{\A}_{PP"}.$ 
We note that\\ 
$(\V^v(\G_{SP})\oplus (\V^i(\G_{SP}))_{S"P"})\lrar {\A}_{SS"}
= (\breve{\A}_{PP"})_{P(-P")}.$
Thus the constraints of $\N_P$ are  
equivalent, as far as the variables $(v_P,i_{P"})$
are concerned, to the first set of equations of Equation \ref{eqn:primaldual}.
\\Next, 
$(\V^v(\G_{SP})\oplus (\V^i(\G_{SP}))_{S"P"})\lrar {\V}_{SS"}=
(\breve{\V}_{PP"})_{P(-P")},$ 
using Theorem \ref{thm:IIT2}. 
By Corollary \ref{cor:adjointmultiport},
 the port behaviour of  $\mathcal{N}^{adj}_{\tilde P}$ is $\breve{\V}^{adj}_{\tilde{P}\tilde{P}"}\equivd (\breve{\V}^{adj}_{PP"})_{\tilde{P}\tilde{P}"}.$ 
Further, 
$(\breve{\V}^{adj}_{\tilde{P}\tilde{P}"})_{{P}"(-{P})}=
(\breve{\V}^{adj}_{\tilde{P}\tilde{P}"})_{(-{P}"){P}}=
\breve{\V}^{*}_{PP"}.$
Thus the constraints of $\mathcal{N}^{adj}_{\tilde P}$ together 
with the constraint $ v_{P}=-i_{\tilde{P}"}; i_{P"}=v_{\tilde{P}}$
(the gyrator $\g^{{P}\tilde{P}}$),
are equivalent, as far as the variables $(v_P,i_{P"})$
 are concerned, to the defining equations of $\breve{\V}^{*}_{PP"},$
 i.e., to 
 the second set of equations of Equation \ref{eqn:primaldual}.

Thus, the constraints of $[\mathcal{N}_P\oplus \mathcal{N}^{adj}_{\tilde P}]\cap \g^{{P}\tilde{P}},$ are equivalent as far as the variables $(v_P,i_{P"})$
 are concerned, to
the constraints of Equation \ref{eqn:primaldual}.
 We have seen that this equation has a unique solution, say $(\hat{v}_P, -\hat{i}_{P"}).$
Using the gyrator constraints $ v_{P}=-i_{\tilde{P}"}; i_{P"}=v_{\tilde{P}},$
 we get a corresponding vector $(\tilde{v}_{\tilde{P}}, \tilde{i}_{\tilde{P}"})$
that is the restriction of a solution of  $\mathcal{N}^{adj}_{\tilde P}$
to $\tilde{P}\uplus \tilde{P}".$
Now $(\hat{v}_P, -\hat{i}_{P"})\in \hat{\A}_{PP"}.$ Therefore,
$(\hat{v}_P, \hat{i}_{P"})$ is the restriction of a solution of $\N_P$
to $P\uplus P".$ By the rigidity of $\N_P,$ there is a unique
solution $(\hat{v}_S,\hat{v}_P,\hat{i}_{S"},\hat{i}_{P"})$ of $\N_P.$
By the rigidity of $\mathcal{N}^{adj}_{\tilde P},$ there is a unique
solution $(\tilde{v}_{\tilde{S}},\tilde{v}_{\tilde{P}},\tilde{i}_{\tilde{S}"},\tilde{i}_{\tilde{P}"})$ of $\mathcal{N}^{adj}_{\tilde P}.$
Thus the vector $(\hat{v}_S,\hat{v}_P,\tilde{v}_{\tilde{S}},\tilde{v}_{\tilde{P}},\hat{i}_{S"},\hat{i}_{P"}, \tilde{i}_{\tilde{S}"},\tilde{i}_{\tilde{P}"})$
is the unique solution of $[\mathcal{N}_P\oplus \mathcal{N}^{adj}_{\tilde P}]\cap \g^{{P}\tilde{P}}.$

The device characteristic $\A_{SS"}\oplus {\V}^{adj}_{\tilde{S}\tilde{S}"}\oplus \g^{{P}\tilde{P}}$ is proper 
because $r(\V_{SS"}\oplus {\V}^{adj}_{\tilde{S}\tilde{S}"})= 
r(\V_{SS"})+r(\V^{*}_{SS"})=2|S|= |S|+|\tilde{S}|,$
 $r(\g^{{P}\tilde{P}})=2|P|=|P\uplus \tilde{P}|,$
so that dimension of $\V_{SS"}\oplus {\V}^{adj}_{\tilde{S}\tilde{S}"}\oplus \g^{{P}\tilde{P}}$  equals $|S|+|\tilde{S}|+|P\uplus \tilde{P}|.$
Since the network $[\mathcal{N}_P\oplus \mathcal{N}^{adj}_{\tilde P}]\cap \g^{{P}\tilde{P}}$
has proper device characteristic and also a unique solution,
our conventional circuit simulator can process it and obtain its solution.


We have computed a single vector $x_{PP"}^{p}\equivd (v_{P},-i_{P"})\in \breve{\A}_{PP"}.$
We next consider the problem of finding a generating set for the vector space translate
$\breve{\V}_{PP"}$ of $\breve{\A}_{PP"}.$ \\Let $ \g_{tv}^{P\tilde{P}}$ denote the 
affine space that is the solution set of the constraints\\
$v_{e_j} = -i_{\tilde{e}_j"}, e_j\in P,j\ne t,  v_{e_t}+1 = -i_{\tilde{e}_t"}; i_{e_i}=v_{\tilde{e}_i"}, e_i\in P.$
\\Let $ \g_{ti}^{P\tilde{P}}$ denote the
affine space that is the solution set of the constraints\\
$v_{e_i} = -i_{\tilde{e}_i"}, e_i\in P; i_{e_j}=v_{\tilde{e}_j"}, e_j\in P,j\ne t, i_{e_t}+1=v_{\tilde{e}_t"}.
 $

Now solve $[\mathcal{N}^{hom}_P\oplus \mathcal{N}^{adj}_{\tilde P}]\cap \g_{tv}^{{P}\tilde{P}}$ for each $e_t\in P$ and $[\mathcal{N}^{hom}_P\oplus \mathcal{N}^{adj}_{\tilde P}]\cap \g_{ti}^{{P}\tilde{P}}$ for each $e_t"\in P"$
(see Figure \ref{fig:temp4}(b) and \ref{fig:temp4}(c)).
(We remind the reader that $\mathcal{N}^{hom}_P$ is obtained
from $\N_{P}$ by replacing its device characteristic $\A_{SS"}$ by
the vector space translate 
$\V_{SS"}.$)

We prove below, in Lemma \ref{lem:behaviourbasis},
 that each solution yields a vector in $\breve{\V}_{PP"}$
and the vectors corresponding to all $e_t\in P, e_t"\in P",$ form a generating set for $\breve{\V}_{PP"}.$

We summarize these steps in the following Algorithm.
%
%
%
\begin{algorithm}
\label{alg:TN}
Input: A multiport $ \N_P$ on $\G_{SP}$ with affine
device characteristic ${\A}_{SS"}.$\\
Output: The port behaviour $\breve{\A}_{PP"}$ of $ \N_P$ if
$ \N_P$ is rigid.\\
Otherwise a statement that $ \N_P$ is not rigid.

Step 1. Build the network $\N^{large}\equivd [\mathcal{N}_P\oplus \mathcal{N}^{adj}_{\tilde P}]\cap \g^{{P}\tilde{P}}$ on graph 
  $\G_{SP}\oplus \G_{\tilde{S}\tilde{P}}$ \\with device characteristic ${\A}_{SS"}\oplus {\V}^{adj}_{\tilde{S}\tilde{S}"}\oplus \g^{{P}\tilde{P}},$ 
where ${\V}_{SS"}$ is the vector space translate of ${\A}_{SS"}$
and ${\V}^{adj}_{\tilde{S}\tilde{S}"}\equivd ({\V}_{SS"}^{*})_{(-\tilde{S}")\tilde{S}}.$
(see Figure \ref{fig:temp4}(a).)
\\
Find the unique solution (if it exists) of $\N^{large}$ and restrict it to $P\uplus P"$
to obtain\\ $(v^p_{P},i^p_{P"}).$ The vector $(v^p_{P},-i^p_{P"})$ belongs to $ \breve{\A}_{PP"}.$\\
If no solution exists or if there are non unique solutions output `$\N_P$ not rigid' and\\ STOP.
\\
Step 2. Let $\mathcal{N}^{hom}_P$ be  obtained by replacing 
the device characteristic ${\A}_{SS"}$ by ${\V}_{SS"}$ in $\mathcal{N}_P.$
\\
For $t= 1, \cdots , |P|,$ build and solve $[\mathcal{N}^{hom}_P\oplus \mathcal{N}^{adj}_{\tilde P}]\cap \g_{tv}^{{P}\tilde{P}}$ and restrict it to $P\uplus P"$
to obtain $(v^{tv}_{P},i^{tv}_{P"}).$ (see Figure \ref{fig:temp4}(b).)

The vector $(v^{tv}_{P},-i^{tv}_{P"})\in \breve{\V}_{PP"}.$
\\
For $t= 1, \cdots , |P|,$ build and solve $[\mathcal{N}^{hom}_P\oplus \mathcal{N}^{adj}_{\tilde P}]\cap \g_{ti}^{{P}\tilde{P}}$ and restrict it to $P\uplus P"$
to obtain $(v^{ti}_{P},i^{ti}_{P"}).$ The vector $(v^{ti}_{P},-i^{ti}_{P"})\in \breve{\V}_{PP"}.$
(see Figure \ref{fig:temp4}(c).)
\\
Step 3. Let $\breve{\V}_{PP"}$ be the span of the vectors
$(v^{tv}_{P},-i^{tv}_{P"}), (v^{ti}_{P},-i^{ti}_{P"}), t= 1, \cdots ,|P|.$
\\
Output $\breve{\A}_{PP"}\equivd (v^p_{P},-i^p_{P"})+\breve{\V}_{PP"}.$\\
STOP
\end{algorithm}
We complete the justification of Algorithm \ref{alg:TN} in the following 
lemma. 
\begin{lemma}
\label{lem:behaviourbasis}
Let $\N_{P}\equivd (\Gsp, \A_{SS"})$  be rigid. Let $\breve{\A}_{PP"}$ be the port behaviour of $\N_{P}$ and let $\breve{\V}_{PP"}$ be the vector space translate of $\breve{\A}_{PP"}.$
Then the following hold.
\begin{enumerate}
\item  The network $\N^{large}\equivd [\mathcal{N}_P\oplus \mathcal{N}^{adj}_{\tilde P}]\cap \g^{{P}\tilde{P}}$ has a proper device characteristic and has a unique solution.
\item Each of the  networks $[\mathcal{N}^{hom}_P\oplus \mathcal{N}^{adj}_{\tilde P}]\cap \g_{tv}^{{P}\tilde{P}}, e_t\in P, [\mathcal{N}^{hom}_P\oplus \mathcal{N}^{adj}_{\tilde P}]\cap \g_{ti}^{{P}\tilde{P}}, e"_t\in P",$
has a unique solution and restriction 
of the solution to $P\uplus P"$ gives a vector $(v^{tv}_{P}, i^{tv}_{P"})$ such that
$(v^{tv}_{P}, -i^{tv}_{P"})\in \breve{\V}_{PP"}$
or a vector $(v^{ti}_{P}, i^{ti}_{P"})$ such that
$(v^{ti}_{P}, -i^{ti}_{P"})\in \breve{\V}_{PP"}.$

\item The vectors $(v^{tv}_{P}, -i^{tv}_{P"}),\  t=1, \cdots , |P|,
(v^{ti}_{P}, -i^{ti}_{P"}),\  t=1, \cdots , |P|,$ 
form a generating set for $\breve{\V}_{PP"}.$
\end{enumerate}
\end{lemma}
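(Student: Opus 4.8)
\emph{Plan of proof.} I would handle the three parts in turn; parts~1 and~2 amount to invoking machinery already in place, and part~3 is the only one needing a short new computation.

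\textbf{Part 1.} This is exactly the network analysed in the discussion immediately preceding the lemma, so here I would merely restate that argument. Its port-variable constraints are equivalent to Equation~\ref{eqn:primaldual}, whose coefficient matrix is built from two blocks of rows spanning complementary orthogonal spaces; by the reasoning given for Equation~\ref{eqn:uniquesol1} such a matrix is nonsingular, so Equation~\ref{eqn:primaldual} has a unique solution in $(v_P,i_{P"})$. Rigidity of $\N_P$ and hence of $\N^{adj}_{\tilde P}$ (Theorem~\ref{thm:regularrecursive}, part~4) then lifts this unique port condition to a unique solution of the whole network, and properness of $\A_{SS"}\oplus\V^{adj}_{\tilde S\tilde S"}\oplus\g^{{P}\tilde{P}}$ follows from the rank count $r(\V_{SS"})+r(\V^*_{SS"})+r(\g^{{P}\tilde{P}})=2|S|+2|P|$, the size of the combined edge set.

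\textbf{Part 2.} The key observation is that every Step~2 network lives on the same graph as $\N^{large}$ and has device characteristic $\V_{SS"}\oplus\V^{adj}_{\tilde S\tilde S"}\oplus\g_{tv}^{{P}\tilde{P}}$ (or the $ti$-analogue), which is proper and whose vector space translate is $\V_{SS"}\oplus\V^{adj}_{\tilde S\tilde S"}\oplus\g^{{P}\tilde{P}}$ --- the translate of the device characteristic of $\N^{large}$. So I would apply Theorem~\ref{thm:regularrecursive}, part~5, first to $\N^{large}$ (which has a unique solution by part~1) to conclude that the homogeneous network $(\G,\V_{SS"}\oplus\V^{adj}_{\tilde S\tilde S"}\oplus\g^{{P}\tilde{P}})$ has a unique solution, and then to each Step~2 network in the reverse direction. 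Restricting such a solution to the edges of $\N^{hom}_P$ gives a solution of $\N^{hom}_P$ (the combined graph is a disjoint union, so KVL and KCL restrict), and since the port behaviour of $\N^{hom}_P$ is the vector space translate $\breve{\V}_{PP"}$ of $\breve{\A}_{PP"}$ (Theorem~\ref{thm:translatemultiport}), the definition of port behaviour yields $(v^{tv}_P,-i^{tv}_{P"})\in\breve{\V}_{PP"}$, and likewise $(v^{ti}_P,-i^{ti}_{P"})\in\breve{\V}_{PP"}$.

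\textbf{Part 3.} Here I would re-run the reduction of part~1 with the source-accompanied gyrators. Replacing $\g^{{P}\tilde{P}}$ by $\g_{tv}^{{P}\tilde{P}}$ (resp.\ $\g_{ti}^{{P}\tilde{P}}$) perturbs one of the two gyrator relations at port $e_t$ by a unit vector; carrying this through the identification used to derive the second block of Equation~\ref{eqn:primaldual}, together with the attendant $P(-P")$ relabellings, should turn that block into $B"v_P-Q"i_{P"}=c$ with $c$ equal, up to sign, to the $t$-th column of $B"$ (resp.\ of $Q"$), while the first block $Bv_P-Qi_{P"}=0$ coming from $\N^{hom}_P$ is unchanged. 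Letting $t$ run over $1,\dots,|P|$ in both families, the vectors $c$ run, up to sign, through all $2|P|$ columns of $(B"\,|\,{-Q"})$; the rows of $(B"\,|\,{-Q"})$, being a basis of the orthogonal complement of the row space of $(B\,|\,{-Q})$, number $r(\breve{\V}_{PP"})$ (Theorem~\ref{thm:perpperp}), so $(B"\,|\,{-Q"})$ has full row rank $r(\breve{\V}_{PP"})$ and its columns span $\mathbb{F}^{\,r(\breve{\V}_{PP"})}$. Finally, the coefficient matrix of Equation~\ref{eqn:primaldual} is nonsingular (part~1), so the map sending a right-hand side $(0,c)$ to the unique solution $(v_P,i_{P"})$ is linear, has trivial kernel (a zero solution forces $c=0$), and has image contained in $\breve{\V}_{PP"}$, which has dimension $r(\breve{\V}_{PP"})$; hence its image is exactly $\breve{\V}_{PP"}$. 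Therefore the images of the spanning family $\{c\}$, namely the vectors $(v^{tv}_P,-i^{tv}_{P"})$ and $(v^{ti}_P,-i^{ti}_{P"})$, span $\breve{\V}_{PP"}$, which is assertion~3.

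\textbf{Expected main obstacle.} Parts~1 and~2 are bookkeeping on top of Theorem~\ref{thm:regularrecursive}. The delicate step is the claim in part~3 that inserting a unit source into one gyrator equation adds exactly (up to sign) the corresponding column of a representative matrix of $\breve{\V}_{PP"}$ to the right-hand side of Equation~\ref{eqn:primaldual}: making this precise requires carefully matching the circuit sign convention for port currents against the port-behaviour convention (the repeated $P(-P")$ relabellings) and against the gyrator's own pattern $v_P=-i_{\tilde{P}"},\ i_{P"}=v_{\tilde{P}}$. Once those signs are pinned down, the spanning conclusion is immediate from the rank count and the nonsingularity established in part~1.
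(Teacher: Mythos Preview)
Your proposal is correct and follows essentially the same approach as the paper. Parts~1 and~2 are identical to the paper's treatment (the paper in fact omits part~1 entirely, pointing back to the preceding discussion, and for part~2 uses exactly your argument via Theorem~\ref{thm:regularrecursive}, part~5). For part~3 the paper does precisely what you outline---it writes out the full $4\times 4$ block system including the gyrator rows, eliminates $v_{\tilde P},i_{\tilde P"}$, and obtains the reduced system with right-hand side $(0,-B"I^t)$ or $(0,Q"I^t)$, then observes that these columns span the column space of $(B"\,|\,-Q")$; your linear-isomorphism phrasing is a mild repackaging of the same observation, and your flagged ``main obstacle'' (the sign bookkeeping) is exactly what the paper resolves by writing the equations out explicitly.
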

\begin{proof}
We only prove parts 2 and 3 since part 1 has already been shown.\\
2. If we replace the device characteristic 
of $ [\mathcal{N}_P\oplus \mathcal{N}^{adj}_{\tilde P}]\cap \g^{{P}\tilde{P}},$
or that of  $[\mathcal{N}^{hom}_P\oplus \mathcal{N}^{adj}_{\tilde P}]\cap \g_{tv}^{{P}\tilde{P}}$ or that of  $[\mathcal{N}^{hom}_P\oplus \mathcal{N}^{adj}_{\tilde P}]\cap \g_{ti}^{{P}\tilde{P}}$ by its vector space translate, i.e., by
${\V}_{SS"}\oplus {\V}^{adj}_{SS"} \oplus \g^{{P}\tilde{P}},$ 
we get the device characteristic of $[\mathcal{N}^{hom}_P\oplus \mathcal{N}^{adj}_{\tilde P}]\cap \g^{{P}\tilde{P}}.$ All three networks have the same graph 
$\G_{SP}\oplus \G_{\tilde{S}\tilde{P}}.$
\\By part 5 of Theorem \ref{thm:regularrecursive},
 a linear network $\N$ with a proper device characteristic, has a unique solution iff $\N^{hom}$ has a unique solution.
 We know that $ [\mathcal{N}_P\oplus \mathcal{N}^{adj}_{\tilde P}]\cap \g^{{P}\tilde{P}}$ has a proper device characteristic and has a unique solution. Thus $[\mathcal{N}^{hom}_P\oplus \mathcal{N}^{adj}_{\tilde P}]\cap \g^{{P}\tilde{P}},$ 
has a unique solution and therefore also $[\mathcal{N}^{hom}_P\oplus \mathcal{N}^{adj}_{\tilde P}]\cap \g_{tv}^{{P}\tilde{P}}$ and $[\mathcal{N}^{hom}_P\oplus \mathcal{N}^{adj}_{\tilde P}]\cap \g_{ti}^{{P}\tilde{P}}.$ \\
The restriction of a solution $(\hat{v}_S,\hat{v}_P,\tilde{v}_{\tilde{S}},\tilde{v}_{\tilde{P}},\hat{i}_{S"},\hat{i}_{P"}, \tilde{i}_{\tilde{S}"},\tilde{i}_{\tilde{P}"})$ of $[\mathcal{N}^{hom}_P\oplus \mathcal{N}^{adj}_{\tilde P}]\cap \g_{tv}^{{P}\tilde{P}}$ or of \\$[\mathcal{N}^{hom}_P\oplus \mathcal{N}^{adj}_{\tilde P}]\cap \g_{ti}^{{P}\tilde{P}}$ to $S\uplus P\uplus S"\uplus P"$ gives a solution of $\mathcal{N}^{hom}_P.$ Its restriction to $ P\uplus P"$ gives $(\hat{v}_P, -\hat{i}_{P"})
.$  By Theorem \ref{thm:translatemultiport}, $(\hat{v}_P, \hat{i}_{P"})\in \breve{\V}_{PP"}.$
%
\\
3. Let $\breve{\V}_{PP"}$ be the solution space of 
$Bv_{P} -Qi_{P"} =0$ 
and let $(\breve{\V}^{adj}_{PP"})_{\tilde{P}\tilde{P}"}=(\breve{\V}^{*}_{PP"})_{(-\tilde{P}")\tilde{P}},$
be the solution space of 
$Q^{"}v_{\tilde{P}}+B^{"}i_{\tilde{P}"}=0,$
where the row spaces of $(B|-Q), (B^{"}|-Q^{"})$ are complementary
orthogonal.
A vector $(v_P,v_{\tilde{P}}, -i_{P"},-i_{\tilde{P}"})$ being the restriction of a solution of $ [\mathcal{N}_P\oplus \mathcal{N}^{adj}_{\tilde P}]\cap \g_{tv}^{{P}\tilde{P}}$ to $P\uplus P"\uplus \tilde{P}\uplus \tilde{P}"$ 
is equivalent to $(v_P,v_{\tilde{P}}, i_{P"},i_{\tilde{P}"})$  being a solution to the equation
\begin{align}
\label{eqn:primaldual1}
\ppmatrix{
        B  &  -Q & 0 & 0\\
        0 &  0  & Q^{"} & B^{"}\\
I &  0  & 0 & I\\
0&  I&-I& 0}
\ppmatrix{v_{P}\\i_{P"}\\v_{\tilde{P}}\\i_{\tilde{P}"}}&=\ppmatrix{0\\ 0\\-I^t\\0
} 
\end{align}
and a vector 
$(v_P,v_{\tilde{P}}, -i_{P"},-i_{\tilde{P}"})$ being the restriction of a solution of
 $ [\mathcal{N}_P\oplus \mathcal{N}^{adj}_{\tilde P}]\cap \g_{ti}^{{P}\tilde{P}}$ to $P\uplus P"\uplus \tilde{P}\uplus \tilde{P}"$
is equivalent to $(v_P,v_{\tilde{P}}, i_{P"},i_{\tilde{P}"})$ being a solution to the equation
\begin{align}
\label{eqn:primaldual2}
\ppmatrix{
        B  &  -Q & 0 & 0\\
        0 &  0  & Q^{"} & B^{"}\\
I &  0  & 0 & I\\
0&  I&-I& 0}
\ppmatrix{v_{P}\\i_{P"}\\v_{\tilde{P}}\\i_{\tilde{P}"}}&=\ppmatrix{0\\ 0\\0\\-I^t
}, 
\end{align}
where $I^t$ denotes the $t^{th}$ column of a  $|P|\times |P|$ identity matrix.
In the variables $v_{P},i_{P"},$ Equation \ref{eqn:primaldual1} reduces to

\begin{align}
\label{eqn:primaldual3}
\ppmatrix{
        B  &  -Q \\
        B^{"} & -Q^{"}}
\ppmatrix{v_{P}\\i_{P"}}&=\ppmatrix{0\\-B^{"}I^t
},
\end{align}
and Equation \ref{eqn:primaldual2} reduces to

\begin{align}
\label{eqn:primaldual4}
\ppmatrix{
        B  &  -Q \\
        B^{"} & -Q^{"}}
\ppmatrix{v_{P}\\i_{P"}}&=\ppmatrix{0\\Q^{"}I^t
}.
\end{align}
It is clear that a vector belongs to $\breve{\V}_{PP"}$ iff it is a 
solution of 
\begin{align}
\label{eqn:primaldual5}
\ppmatrix{
        B  &  -Q \\
        B^{"} & -Q^{"}}
\ppmatrix{v_{P}\\i_{P"}}&=\ppmatrix{0\\x
},
\end{align}
for some  vector $x.$
The space of all such $x$ vectors is the column space of the matrix 
$(B^{"} |-Q^{"}).$ Noting that, for any matrix $K,$ the product  $KI^t$ is the $t^{th}$ column 
of $K,$ we see that the solutions, for $t=1, \cdots , |P|,$   
of Equations \ref{eqn:primaldual3}
and \ref{eqn:primaldual4}, span $\breve{\V}_{PP"}.$

\end{proof}
\begin{remark} 
If the multiport $\mathcal{N}_P$ is not rigid,
it may not have a solution and then the port behaviour $\breve{\A}_{PP"}$ would be  void.
Even if the multiport has a solution, so that the behaviour $\breve{\A}_{PP"}$ is nonvoid,
the above general procedure of solving
$[\mathcal{N}_P\oplus \mathcal{N}^{adj}_{\tilde P}]\cap \g^{{P}\tilde{P}},$
will yield non unique internal voltages and currents in the multiports
$\mathcal{N}_P,\mathcal{N}^{adj}_{\tilde P}.$
Therefore, $[\mathcal{N}_P\oplus \mathcal{N}^{adj}_{\tilde P}]\cap \g^{{P}\tilde{P}}$ will have  non unique solution.
(In both the above cases our conventional circuit simulator would give error
messages.)
Thus $[\mathcal{N}_P\oplus \mathcal{N}^{adj}_{\tilde P}]\cap \g^{{P}\tilde{P}}$ has a  unique solution iff $\N_P$  is rigid.
\end{remark} 
\section{Maximum power transfer for linear multiports}
\label{sec:maxpower}
The original version of the maximum power transfer theorem, states that 
a linear $1$-port transfers maximum power to a  load if 
the latter has value equal to the adjoint (conjugate transpose) of the Thevenin impedance of the $1$-port.
In the multiport case the Thevenin equivalent is an impedance matrix 
whose adjoint  has to be connected to the multiport for maximum power transfer. These should be regarded as restricted forms of the theorem since 
they do not handle the case where the Thevenin equivalent does not exist.

It was recognized early that a convenient way of studying maximum power
transfer is to study the port conditions for which such a transfer occurs
\cite{desoer1,narayananmp}. We will use this technique 
to obtain such port conditions for an  affine multiport behaviour.
In the general, not necessarily strictly passive, case,  we can only try to obtain
stationarity of power transfer, rather than maximum power transfer.
After obtaining these conditions we show that they are in fact
 achieved, if at all, when the 
multiport is terminated by its adjoint (which is easy to build), through an ideal transformer.
This means that the multiport behaviour 
need only be available as the port behaviour of a multiport $\N_P,$
and not explicitly, as an affine space $\breve{\A}_{PP"}.$ 

\subsection{Stationarity of power transfer for linear multiports}
\label{subsec:maxpower}
We fix some preliminary notation needed for the discussion of the maximum power transfer theorem.
For any matrix $M,$ we take  $\overline{M}$ to be the conjugate and
$M^*$ to be the conjugate transpose.
Our convention for the sign of power associated with a multiport
behaviour is
that when $(\breve{v}_{P},\breve{i}_{P"})\in \breve{\A}_{PP"},$
the  {\bf power absorbed} by the multiport behaviour  $\breve{\A}_{PP"}$
is $\langle \breve{v}_{P}, \breve{i}_{P"}\rangle + \langle \breve{i}_{P"}, \breve{v}_{P}\rangle = \overline{\breve{v}_{P}^T}\breve{i}_{P"}+\overline{\breve{i}_{P"}^T}\breve{v}_{P}.$ (We omit the scale factor $\frac{1}{2}$ for better readability of the expressions involved.) The power delivered by it, is therefore the negative of this quantity.

The maximum power transfer problem is
\begin{align}
\label{eqn:optprobcomplex}
\mbox{minimize}\ \  
(\overline{\breve{v}_{P}^T}\breve{i}_{P"}+\overline{\breve{i}_{P"}^T}\breve{v}_{P})\\
(\breve{v}_{P},\breve{i}_{P"})\in \breve{\A}_{PP"},\ \mbox{or equivalently,}\ B\breve{v}_{P}-Q\breve{i}_{P"}=s.
\end{align}


If $(\breve{v}^{stat}_{P},\breve{i}^{stat}_{P"}),$
is a stationary point for the optimization problem \ref{eqn:optprobcomplex}
, we
have
\begin{align}
\label{eqn:optprob31}
 (\breve{i}^{stat}_{P"})^T\overline{\delta\breve{v}}_{P}+ (\breve{v}^{stat}_{P})^T\overline{\delta\breve{i}}_{P"}
+ (\overline{\breve{i}^{stat}_{P"})}^T{\delta\breve{v}}_{P}+ (\overline{\breve{v}^{stat}_{P})}^T{\delta\breve{i}}_{P"}
&=0,&
\end{align}
for every  vector $(\delta\breve{v}_{P},\delta \breve{i}_{P"}),$
such that $({B}({\delta\breve{v}}_{P})-{{Q}}({\delta \breve{i}}_{P"}))=0.$
Therefore, Equation \ref{eqn:optprob31} is satisfied iff,  for some vector $\lambda,$
 $${((\breve{i}^{stat}_{P"})^T|(\breve{v}^{stat}_{P})^T)-\lambda^T(\overline{B}|-\overline{{Q}})}
=0.$$
Thus the stationarity at $(\breve{v}^{stat}_{P},\breve{i}^{stat}_{P"}),$ is equivalent to
$${((\breve{v}^{stat}_{P})^T|(\breve{i}^{stat}_{P"})^T)-\lambda^T(-\overline{Q}|\overline{{B}})}
=0.$$
Since we must have
$$B\breve{v}^{stat}_{P}-Q\breve{i}^{stat}_{P"}=s,$$
the stationarity condition reduces to
\begin{align}
\label{eqn:optprob51}
\ppmatrix{B&-Q}\ppmatrix{-Q^*\\B^*}\lambda=s,
\end{align}
The vector space translate, $\breve{\V}_{PP"}$ of $\breve{\A}_{PP"}$ is the solution space
of the equation, $B\breve{v}_{P}-Q\breve{i}_{P"}=0,$
  and $ (\breve{\V}_{PP"}^{adj})_{P(-P")}\equivd (\breve{\V}^{*}_{PP"})_{P"P},$ is the row space of
$(-\overline{Q}|\overline{B}).$\\
Thus, the stationarity condition says that $(\breve{v}^{stat}_{P},\breve{i}^{stat}_{P"})$
belongs to $\breve{\A}_{PP"}\cap(\breve{\V}_{PP"}^{adj})_{P(-P")}.$\\
In the case where the multiport has a Thevenin impedance  $Z$ and Thevenin voltage $E,$ the above 
stationarity condition reduces to the condition $(Z+Z^*)(-\breve{i}^{stat}_{P"})^T) =E$ (see example in Section \ref{sec:intro}).

We note that, even when the multiport is rigid, 
%
%
%
%
%
Equation \ref{eqn:optprob51} may have no solution, in which case we have no stationary vectors
for power transfer. 
If the equation has a unique solution,  using that $\lambda $ vector
we get a unique stationary vector $(\breve{v}^{stat}_{P}, \breve{i}^{stat}_{P"}).$
We next show that the stationarity condition is achieved at the ports of the multiport $\N_P,$
if we terminate it by $\N^{adj}_{\tilde{P}}$ through the ideal transformer $\T^{P\tilde{P}}$ resulting in the network $[\N_P\oplus\N^{adj}_{\tilde{P}}]\cap \T^{P\tilde{P}},$ where the ideal transformer $\T^{P\tilde{P}}$ satisfies the 
equations  $v_{P}=v_{\tilde{P}}; i_{P}=-i_{\tilde{P}}.$

\begin{theorem}
\label{thm:maxpowerport}
Let  $\mathcal{N}_P,$ on graph $\G_{SP}$
and device characteristic ${\A}_{SS"},$ have the port behaviour
$\breve{\A}_{PP"}.$ 
Let ${\V}_{SS"}, \breve{\V}_{PP"}$ be the vector space translates of ${\A}_{SS"}, \breve{\A}_{PP"},$
respectively.
Let $\mathcal{N}^{adj}_{\tilde P}$ be on the disjoint copy $\G_{\tilde{S}\tilde{P}}$ of $\G_{SP},$ with device characteristic ${\V}^{adj}_{\tilde{S}\tilde{S}"}\equivd ({\V}^{adj}_{SS"})_{\tilde{S}\tilde{S}"}.$ 
\begin{enumerate}
\item A vector $(v_{P},i_{P"})$ is the restriction of a solution of the network $\N^{large}\equivd [\mathcal{N}_P\oplus \mathcal{N}^{adj}_{\tilde P}]\cap \T^{{P}\tilde{P}}$ to $P\uplus P",$ iff 
$(v_{P},-i_{P"})\in \breve{\A}_{PP"}\cap(\breve{\V}_{PP"}^{adj})_{P(-P")}.$
\item  
%
%
Let $(\breve{v}^{stat}_{P},\breve{i}^{stat}_{P"})\in
\breve{\A}_{PP"}.$ Then $(\breve{v}^{stat}_{P},\breve{i}^{stat}_{P"})$ satisfies 
stationarity condition with respect to the power absorbed by 
$(\breve{v}_{P},\breve{i}_{P"})\in
\breve{\A}_{PP"}$ 
 iff $(\breve{v}^{stat}_{P},\breve{i}^{stat}_{P"})\in
\breve{\A}_{PP"}\cap(\breve{\V}_{PP"}^{adj})_{P(-P")}.$
\item Let $(v^{1}_{P},-i^{1}_{P"})$ be the restriction of a solution of the multiport $\N_P,$ to $P\uplus P".$
Then $(\breve{v}^{1}_{P},\breve{i}^{1}_{P"})$
satisfies the stationarity condition with respect to 
the power absorbed by 
$(\breve{v}_{P},\breve{i}_{P"})\in
\breve{\A}_{PP"}$ 
%
%
iff $(v^{1}_{P},-i^{1}_{P"})$  is the restriction of a solution of the network
$[\mathcal{N}_P\oplus \mathcal{N}^{adj}_{\tilde P}]\cap \T^{{P}\tilde{P}},$
to $P\uplus P".$

\end{enumerate}
\end{theorem}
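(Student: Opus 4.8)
The plan is to prove the three parts in order, deriving part (3) as an immediate consequence of parts (1) and (2). Part (1) I would obtain by repeating, essentially verbatim, the analysis of the gyrator termination in Section \ref{sec:computingbehaviour}, with the gyrator $\g^{{P}\tilde{P}}$ replaced by the ideal transformer $\T^{{P}\tilde{P}}$. The two inputs are: first, the identity $(\V^v(\G_{SP})\oplus (\V^i(\G_{SP}))_{S"P"})\lrar\A_{SS"}=(\breve{\A}_{PP"})_{P(-P")}$ used there, which says that the constraints of $\N_P$, read through the variables $(v_P,i_{P"})$, are exactly ``$(v_P,-i_{P"})\in\breve{\A}_{PP"}$''; and second, Corollary \ref{cor:adjointmultiport} together with the same identity, which gives that the constraints of $\N^{adj}_{\tilde{P}}$, read through $(v_{\tilde{P}},i_{\tilde{P}"})$, are ``$(v_{\tilde{P}},-i_{\tilde{P}"})\in\breve{\V}^{adj}_{\tilde{P}\tilde{P}"}$'', where $\breve{\V}^{adj}_{\tilde{P}\tilde{P}"}\equiv(\breve{\V}^{adj}_{PP"})_{\tilde{P}\tilde{P}"}$. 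Now imposing the transformer relations $v_P=v_{\tilde{P}}$, $i_{P"}=-i_{\tilde{P}"}$ and eliminating the tilde variables converts the second constraint into ``$(v_P,-i_{P"})\in(\breve{\V}^{adj}_{PP"})_{P(-P")}$'', and intersecting with the first yields exactly $(v_P,-i_{P"})\in\breve{\A}_{PP"}\cap(\breve{\V}^{adj}_{PP"})_{P(-P")}$, as claimed. The converse is obtained by running this backwards: a point of the intersection lies in $\breve{\A}_{PP"}$, so by definition of port behaviour it extends to a solution of $\N_P$; it lies in $(\breve{\V}^{adj}_{PP"})_{P(-P")}$, so the transformer relations produce a port vector of $\N^{adj}_{\tilde{P}}$ which extends to a solution there; and the two solutions glue across $\T^{{P}\tilde{P}}$ into a solution of $\N^{large}$ whose restriction is $(v_P,i_{P"})$. (If $\breve{\A}_{PP"}$ is void both sides of the asserted equivalence fail for every $(v_P,i_{P"})$, so nothing special is needed there.)

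For part (2) I would spell out the Lagrange/orthogonality computation already sketched just before the theorem. The absorbed power $\langle\breve v_P,\breve i_{P"}\rangle+\langle\breve i_{P"},\breve v_P\rangle$ is real-valued but not holomorphic, so I treat it as a smooth real function on the affine space $\breve{\A}_{PP"}$: stationarity at $(\breve v^{stat}_P,\breve i^{stat}_{P"})$ means its first variation vanishes for every admissible perturbation $(\delta\breve v_P,\delta\breve i_{P"})$ with $B\,\delta\breve v_P-Q\,\delta\breve i_{P"}=0$, i.e. for every vector of the tangent space $\breve{\V}_{PP"}$. Writing out Equation \ref{eqn:optprob31} and treating a perturbation and its conjugate as independent, this forces the covector $((\breve i^{stat}_{P"})^T\mid(\breve v^{stat}_{P})^T)$ into the row space of $(\overline B\mid-\overline Q)$, equivalently $((\breve v^{stat}_{P})^T\mid(\breve i^{stat}_{P"})^T)$ into the row space of $(-\overline Q\mid\overline B)$, which is precisely $(\breve{\V}^{adj}_{PP"})_{P(-P")}$. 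Since $(\breve v^{stat}_P,\breve i^{stat}_{P"})\in\breve{\A}_{PP"}$ by hypothesis, stationarity is equivalent to membership in $\breve{\A}_{PP"}\cap(\breve{\V}^{adj}_{PP"})_{P(-P")}$, and conversely any point of that intersection satisfies the variational identity by the same algebra.

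Part (3) is then immediate. By part (2), $(\breve v^1_P,\breve i^1_{P"})$ satisfies the stationarity condition iff it lies in $\breve{\A}_{PP"}\cap(\breve{\V}^{adj}_{PP"})_{P(-P")}$; since $(v^1_P,-i^1_{P"})$ is already the restriction of a solution of $\N_P$, the corresponding port-behaviour vector $(\breve v^1_P,\breve i^1_{P"})$ lies in $\breve{\A}_{PP"}$ automatically, so the condition collapses to membership in $(\breve{\V}^{adj}_{PP"})_{P(-P")}$; and by part (1), this—conjoined with the membership in $\breve{\A}_{PP"}$ that we already have—is exactly the statement that $(v^1_P,-i^1_{P"})$ is the restriction to $P\uplus P"$ of a solution of $[\N_P\oplus\N^{adj}_{\tilde{P}}]\cap\T^{{P}\tilde{P}}$.

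The main obstacle I anticipate is purely bookkeeping: keeping track of the repeated $(-P")$ reflections built into the definitions of port behaviour and of the adjoint, and verifying in particular that the ideal transformer $\T^{{P}\tilde{P}}$ (rather than the gyrator $\g^{{P}\tilde{P}}$, whose termination produced the \emph{sum} $\breve{\A}_{PP"}\cap\breve{\V}^{*}_{PP"}$ collapsed to a point) composes with the adjoint's port behaviour so as to land in $(\breve{\V}^{adj}_{PP"})_{P(-P")}$ and not in a sign- or transpose-twisted variant of it. A smaller subtlety, in part (2), is making the ``stationary point of an apparently complex-valued but genuinely real-valued functional'' argument precise by varying real and imaginary parts independently (equivalently, treating a perturbation and its conjugate as free), so that the Lagrange condition emerges in exactly the stated form $((\breve v^{stat}_{P})^T\mid(\breve i^{stat}_{P"})^T)=\lambda^T(-\overline Q\mid\overline B)$.
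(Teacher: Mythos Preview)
Your proposal is correct and follows essentially the same route as the paper: part (1) via the port-behaviour identity for $\N_P$, Corollary \ref{cor:adjointmultiport} for $\N^{adj}_{\tilde P}$, and the transformer substitution $v_P=v_{\tilde P},\ i_{P"}=-i_{\tilde P"}$ to land in $(\breve{\A}_{PP"})_{P(-P")}\cap\breve{\V}^{adj}_{PP"}$; part (2) as a restatement of the Lagrange computation preceding the theorem; and part (3) as the conjunction of the first two. Your anticipated sign bookkeeping is exactly the content of the paper's proof, and your tracking of it is accurate (one minor slip: in your parenthetical you call the gyrator output a ``sum'' while writing an intersection --- it is the intersection).
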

\begin{proof}
1. The restriction of the set of solutions of  $\mathcal{N}_P$ on graph $\G_{SP}$ to $P\uplus P",$
is\\ 
$[(\V_{SP}\oplus (\V^{*}_{SP})_{S"P"})\cap \A_{SS"}]\circ {PP"}.$
This is  the same as
$(\breve{\A}_{PP"})_{P(-P")}.$\\
The restriction of the set of solutions of $\mathcal{N}^{adj}_{\tilde P}$ on the disjoint copy $\G_{\tilde{S}\tilde{P}}$ of $\G_{SP},$ to $\tilde{P}\uplus \tilde{P}"$ is\\
$[(\V_{\tilde{S}\tilde{P}}\oplus (\V^{*}_{\tilde{S}\tilde{P}})_{\tilde{S}"\tilde{P}"})\cap {\V}^{adj}_{\tilde{S}\tilde{S}"}]\circ {\tilde{P}\tilde{P}"}.$ 
This we know, by Corollary \ref{cor:adjointmultiport},
 to be the same as
$(\breve{\V}^{adj}_{PP"})_{\tilde{P}-\tilde{P}"}.$\\
The restriction of the set of solutions of $\mathcal{N}^{adj}_{P}$ to ${P}\uplus {P}"$ is 
$[(\V_{{S}{P}}\oplus (\V^{*}_{{S}{P}})_{{S}"{P}"})\cap {\V}^{adj}_{{S}{S}"}]\circ {{P}{P}"}.$\\
We thus have,
$[((\V_{\tilde{S}\tilde{P}}\oplus (\V^{*}_{\tilde{S}\tilde{P}})_{\tilde{S}"\tilde{P}"})\cap {\V}^{adj}_{\tilde{S}\tilde{S}"})\cap \T^{P\tilde{P}}]\circ {PP"}
=[((\V_{\tilde{S}\tilde{P}}\oplus (\V^{*}_{\tilde{S}\tilde{P}})_{\tilde{S}"\tilde{P}"})\cap {\V}^{adj}_{\tilde{S}\tilde{S}"})\circ {\tilde{P}\tilde{P}"}]_{P(-P")}$\\$= ([(\V_{{S}{P}}\oplus (\V^{*}_{{S}{P}})_{{S}"{P}"})\cap {\V}^{adj}_{{S}{S}"}]\circ {{P}{P}"})_{P(-P")}=\breve{\V}^{adj}_{PP"},$
since vectors in $ \T^{P\tilde{P}}$ are precisely the ones that satisfy $v_{P}=v_{\tilde{P}"}, i_{P"}=-i_{\tilde{P}"}
.$\\
The restriction of the set of solutions of $[\mathcal{N}_P\oplus \mathcal{N}^{adj}_{\tilde P}]\cap \T^{{P}\tilde{P}},$
to $P\uplus P"$ is therefore equal to \\
$[[(\V_{SP}\oplus (\V^{*}_{SP})_{S"P"})\cap \A_{SS"}]\circ {PP"}]
\cap [[(\V_{\tilde{S}\tilde{P}}\oplus (\V^{*}_{\tilde{S}\tilde{P}})_{\tilde{S}"\tilde{P}"})\cap {\V}^{adj}_{\tilde{S}\tilde{S}"}\cap \T^{P\tilde{P}}]\circ {{P}{P}"}]$
\\
$=(\breve{\A}_{PP"})_{P(-P")}\cap\breve{\V}_{PP"}^{adj}.$

Part 2 follows from  the discussion preceding the theorem and part 3 follows from part 1 and part 2.
\end{proof}
\subsection{Maximum Power Transfer Theorem for passive multiports}
\label{subsec:mptpassive}
We show       below that the stationarity conditions of the previous subsection reduce       to maximum power transfer conditions when the multiport is passive.

The \nw{power absorbed} by a vector 
$(v_{E},i_{E"})$ is given by $\langle v_{E},i_{E"}\rangle +  \langle i_{E"},v_{E}\rangle.$ 
\\
A vector space $\V_{SS"}$ is \nw{passive}, iff 
the power absorbed by $(x_{S},y_{S"})$ is nonnegative, 
whenever $(x_{S},y_{S"})\in \V_{SS"}.$ 
It is \nw{strictly passive} iff 
the power absorbed by every nonzero vector in $\V_{SS"}$ is positive.
An affine space   $\A_{SS"}$  is (strictly) passive iff its vector space translate is (strictly) passive. 
A multiport is (strictly) passive iff its port behaviour  
is (strictly) passive.

We now have a routine result which links passivity of the device characteristic
of a multiport to its port behaviour.
(A \nw{cutset} of a graph is a minimal set of edges 
which when deleted increases the number of connected components of the graph, equivalently, in the case of a connected graph, is a minimal set of edges not contained in any cotree.)

\begin{lemma}
\label{lem:maxpower}
Let $\N_P$ be a multiport on graph $\G_{SP}$ with device characteristic 
$\A_{SS"}.$ 
\begin{enumerate}
\item If $\A_{SS"}$ is passive so is the port behaviour $\breve{\A}_{PP"}$ of $\N_P.$
\item If $\A_{SS"}$ is strictly passive and $P$ contains no loops or cutsets of $\G_{SP},$ then the port behaviour $\breve{\A}_{PP"}$ of $\N_P$
 is also strictly passive.
\end{enumerate}
\end{lemma}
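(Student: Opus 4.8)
The plan is to route everything through a single conservation‑of‑power identity coming from Tellegen's theorem, after which both parts fall out at once. Throughout we may assume $\breve{\A}_{PP"}$ is nonvoid (otherwise there is nothing to prove), so its vector space translate $\breve{\V}_{PP"}$ is a genuine vector space, and by Theorem~\ref{thm:translatemultiport} it equals $([\V^v(\G_{SP})\oplus(\V^i(\G_{SP}))_{S"P"}]\lrar\V_{SS"})_{P(-P")}$.

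The first step is the power‑balance identity. If $(v_P,i_{P"})$ lies in $\breve{\A}_{PP"}$ (respectively in $\breve{\V}_{PP"}$), then unravelling the definition of port behaviour produces an internal vector $(v_S,i_{S"})\in\A_{SS"}$ (respectively $\in\V_{SS"}$) with $(v_S,v_P)\in\V^v(\G_{SP})$ and $(i_{S"},-i_{P"})\in\V^i(\G_{SP})$. By Theorem~\ref{thm:tellegen} the inner product of these two vectors over $S\uplus P$ vanishes, i.e. $\langle v_S,i_{S"}\rangle=\langle v_P,i_{P"}\rangle$; adding this to its complex conjugate gives $\langle v_S,i_{S"}\rangle+\langle i_{S"},v_S\rangle=\langle v_P,i_{P"}\rangle+\langle i_{P"},v_P\rangle$. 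In words: the power absorbed by the port vector $(v_P,i_{P"})$ equals the power absorbed by the internal vector $(v_S,i_{S"})$ — the reversal of the port current built into the definition of port behaviour is exactly what makes the two sides agree in sign, and the identity just says the power delivered into the multiport is the power absorbed by its devices.

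For Part~1, since passivity of an affine space is by definition passivity of its vector space translate, it suffices to show $\breve{\V}_{PP"}$ is passive: for any $(v_P,i_{P"})\in\breve{\V}_{PP"}$ take an internal completion $(v_S,i_{S"})\in\V_{SS"}$; by the identity the power it absorbs equals that absorbed by $(v_S,i_{S"})$, which is $\ge 0$ because $\V_{SS"}$ is passive. For Part~2 I would again reduce to $\breve{\V}_{PP"}$. Let $(v_P,i_{P"})\in\breve{\V}_{PP"}$ be nonzero with internal completion $(v_S,i_{S"})\in\V_{SS"}$; the claim is that the internal part cannot vanish. Indeed, if $v_S=0_S$ and $i_{S"}=0_{S"}$ then $v_P\in\V^v(\G_{SP})\times P$ and, using the sign flip, $i_{P"}\in\V^i(\G_{SP})\times P"$; but a nonzero element of $\V^v(\G_{SP})\times P$ has a cutset of $\G_{SP}$ lying in $P$ as its minimal support, and a nonzero element of $\V^i(\G_{SP})\times P"$ has a loop of $\G_{SP}$ lying in $P$ as its support, both of which the hypothesis forbids; hence $v_P=0_P$ and $i_{P"}=0_{P"}$, contradicting $(v_P,i_{P"})\ne 0$. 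So $(v_S,i_{S"})\ne 0$, it absorbs positive power by strict passivity of $\V_{SS"}$, and by the balance identity so does $(v_P,i_{P"})$; thus $\breve{\V}_{PP"}$ is strictly passive.

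The conjugate‑and‑add step and the matching of the ``power absorbed'' sign conventions are purely routine. The one genuinely load‑bearing point is the graph‑theoretic fact that $\V^v(\G_{SP})\times P$ (respectively $\V^i(\G_{SP})\times P"$) is trivial exactly when $P$ contains no cutset (respectively no loop), for which one uses the identification of $\V^v(\G_{SP})$ with the cut space and $\V^i(\G_{SP})$ with the cycle space of the graph (minimal supports of nonzero vectors being, respectively, cutsets and loops). I expect this to be the only part needing real care; everything else is bookkeeping with the definitions already in place.
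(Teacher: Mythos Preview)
Your proof is correct and follows essentially the same approach as the paper: both derive the power-balance identity from Tellegen's theorem (the paper writes it as $\langle v_S,i_{S"}\rangle+\langle v_P,i_{P"}\rangle=0$ directly from orthogonality of $\V_{SP}$ and $\V_{SP}^{*}$), and both reduce Part~2 to showing that $\V^v(\G_{SP})\times P$ and $\V^i(\G_{SP})\times P"$ are zero under the no-cutset/no-loop hypothesis. The only cosmetic difference is that the paper argues the latter via ``$S$ contains a tree and a cotree, and tree voltages (cotree currents) determine all voltages (currents),'' whereas you invoke the matroid fact that minimal supports in the cut/cycle space are cocircuits/circuits; these are equivalent standard justifications.
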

\begin{proof}
1. We assume that $\breve{\A}_{PP"}$ is nonvoid. We have,\\
$\breve{\A}_{PP"}= ((\V_{SP}\oplus (\V^{*}_{SP})_{S"P"})\lrar \A_{SS"})_{P(-P")} ,$ where $\V_{SP}\equivd \V^v(\G_{SP}).$\\
By Theorem \ref{thm:tellegen}, $\V^{*}_{S"P"}= (\V^i(\G_{SP}))_{S"P"}.$
By Theorem \ref{thm:IIT2}, it follows that the  vector space translate 
of $\breve{\A}_{PP"}$ is 
$\breve{\V}_{PP"}= ((\V_{SP}\oplus (\V^{*}_{SP})_{S"P"})\lrar \V_{SS"})_{P(-P")} .$\\ 
%
%
%
Let  $(v_{P},-i_{P"})$ belong to $ \breve{\V}_{PP"}.$
Then there exist $(v_{S},v_{P})\in \V_{SP}$ and $(i_{S"},i_{P"})\in (\V^{*}_{SP})_{S"P"},$
such that $(v_{S},i_{S"})\in \V_{SS"}.$\\
By the orthogonality of $\V_{SP},\V^{*}_{SP},$ it follows that
$\langle (v_{S},v_{P}),(i_{S"},i_{P"})\rangle =\langle v_{S},i_{S"}\rangle+\langle v_{P},i_{P"}\rangle=0,$ and \\
by the passivity of $ \V_{SS"},$ it follows that $\langle v_{S},i_{S"}\rangle+ \langle i_{S"},v_{S}\rangle  \geq 0.$
Therefore  $\langle v_{P},-i_{P"}\rangle + \langle -i_{P"},v_{P}\rangle\geq 0.$\\
2. Without loss of generality, we assume that the graph $\G_{SP}$ is connected.
If $P$  contains no cutset or circuit of $\G_{SP}, $ then $S$ contains both 
a tree as well as a cotree of $\G_{SP}.$
If the voltages assigned to the branches of a tree are zero, the branches 
in its complement will have zero voltage. Therefore $\V^v(\G_{SP})\times P=\V_{SP}\times P$
must necessarily be a zero vector space.
If the currents in the branches of a cotree are zero the branches
in its complement will have zero current. Therefore $(\V^i(\G_{SP})\times P)_{P"}=\V^{*}_{S"P"}\times P"$
must necessarily be a zero vector space.\\
Now let $(v_{P},-i_{P"})\in \breve{\V}_{PP"}, (v_{P},-i_{P"})\ne 0_{PP"}.$
As in part 1 above, there exist $(v_{S},v_{P})\in \V_{SP}$ and $(i_{S"},i_{P"})\in \V^{*}_{SP},$
such that $(v_{S},i_{S"})\in \V_{SS"}.$\\
Since $\V_{SP}\times P$
and $\V^{*}_{SP}\times P$ are both zero vector spaces,
we must have that $(v_{S},i_{S"})\ne 0_{SS"}$  so that,
by strict passivity of $\V_{SS"},$ we have  
$\langle v_{S},i_{S"}\rangle + \langle i_{S"},v_{S}\rangle > 0.$
Further $\langle (v_{S},v_{P}),(i_{S"},i_{P"})\rangle =0,$ so that
we can conclude $\langle v_{P},-i_{P"}\rangle + \langle -i_{P"},v_{P}\rangle > 0.$
\end{proof}

When a port behaviour is passive or strictly passive, by taking into account second order terms, 
we can show that  the stationarity condition implies a  maximum power delivery condition.

Let $(\breve{v}^{stat}_{P},\breve{i}^{stat}_{P"})\in \breve{\A}_{PP"},$ satisfy the stationarity condition 
$((\breve{v}^{stat}_{P})^T,(\breve{i}^{stat}_{P"})^T) = {\lambda}^T(-\overline{Q}|\overline{B}),$ for some ${\lambda}.$
For any $(\breve{v}_{P},\breve{i}_{P"})\in \breve{\A}_{PP"},$ i.e., such that ${B}\breve{v}_{P}-{Q}\breve{i}_{P"}=s,$ we can write  $(\breve{v}_{P}, \breve{i}_{P"})=(\breve{v}^{stat}_{P}+\Delta\breve{v}_{P},\breve{i}^{stat}_{P"}+\Delta\breve{i}_{P"}).$
We then have ${B}\Delta\breve{v}_{P}-{Q}\Delta\breve{i}_{P"}=0$ and 
$[\overline{B}\overline{(\Delta\breve{v}_{P})}-\overline{Q}\overline{(\Delta\breve{i}_{P"}})]=0.$
Therefore  
$\langle \breve{v}_{P}, \breve{i}_{P"}\rangle + \langle \breve{i}_{P"}, \breve{v}_{P}\rangle =  
\langle \breve{v}_{P}, \breve{i}_{P"}\rangle + \langle \breve{i}_{P"},      \breve{v}_{P}\rangle - \overline{{{\lambda}}^T}[{B}{(\Delta\breve{v}_{P})}-{Q}{(\Delta\breve{i}_{P"}})]
- {{\lambda}^T}[\overline{B}\overline{(\Delta\breve{v}_{P})}-\overline{Q}\overline{(\Delta\breve{i}_{P"}})]$
\\$= 
\langle\breve{v}^{stat}_{P},\breve{i}^{stat}_{P"}\rangle+
\langle\breve{i}^{stat}_{P"},\breve{v}^{stat}_{P}\rangle+
 (\Delta \breve{v}_{P})^T\overline{\breve{i}^{stat}_{P"}}+ (\breve{v}^{stat}_{P})^T\overline{(\Delta \breve{i}_{P"})}+
(\breve{i}^{stat}_{P"})^T\overline{(\Delta \breve{v}_{P})}+ 
(\Delta \breve{i}_{P"})^T\overline{\breve{v}^{stat}_{P}}
$\\$
+\langle \Delta \breve{v}_{P}, \Delta \breve{i}_{P"}\rangle
+\langle \Delta \breve{i}_{P"}, \Delta \breve{v}_{P}\rangle
-
{{\lambda}^T}[\overline{B}\overline{(\Delta\breve{v}_{P})}-\overline{Q}\overline{(\Delta\breve{i}_{P"}})]
- \overline{{\lambda}^T}[{B}{(\Delta\breve{v}_{P})}-{Q}{(\Delta\breve{i}_{P"}})].$
\\ We can rewrite the right side as  \\ 
$\langle\breve{v}^{stat}_{P},\breve{i}^{stat}_{P"}\rangle+ \langle \Delta \breve{v}_{P}, \Delta \breve{i}_{P"}\rangle +((\breve{v}^{stat}_{P})^T+{\lambda}^T\overline{Q})\overline{\Delta\breve{i}_{P"}}+((\breve{i}^{stat}_{P"})^T-{\lambda}^T\overline{B})\overline{\Delta\breve{v}_{P}}$\\
$+\langle\breve{i}^{stat}_{P"},\breve{v}^{stat}_{P}\rangle+ \langle \Delta \breve{i}_{P"}, \Delta \breve{v}_{P}\rangle +(\overline{(\breve{v}^{stat}_{P})^T}+\overline{{\lambda}^T}{Q}){\Delta\breve{i}_{P"}}+
(\overline{(\breve{i}^{stat}_{P"})^T}-\overline{{\lambda}^T}{B}){\Delta\breve{v}_{P}}.$\\
Applying the condition 
$((\breve{v}^{stat}_{P})^T,(\breve{i}^{stat}_{P"})^T) = {\lambda}^T(-\overline{Q}|\overline{B}),$ 
this expression reduces to\\
$\langle\breve{v}^{stat}_{P},\breve{i}^{stat}_{P"}\rangle+
\langle\breve{i}^{stat}_{P"},\breve{v}^{stat}_{P}\rangle
+\langle \Delta \breve{v}_{P}, \Delta \breve{i}_{P"}\rangle +
\langle \Delta \breve{i}_{P"}, \Delta \breve{v}_{P}\rangle 
.$
Therefore, 
$\langle \breve{v}_{P}, \breve{i}_{P"}\rangle + \langle \breve{i}_{P"}, \breve{v}_{P}\rangle$\\
$=\langle\breve{v}^{stat}_{P},\breve{i}^{stat}_{P"}\rangle+
\langle\breve{i}^{stat}_{P"},\breve{v}^{stat}_{P}\rangle
+\langle \Delta \breve{v}_{P}, \Delta \breve{i}_{P"}\rangle +
\langle \Delta \breve{i}_{P"}, \Delta \breve{v}_{P}\rangle 
.$

If $\breve{\A}_{PP"}$ is passive we have $\langle \Delta \breve{v}_{P}, \Delta \breve{i}_{P"}\rangle+\langle \Delta \breve{i}_{P"}, \Delta \breve{v}_{P}\rangle\geq 0 ,$
so that\\ 
$\langle \breve{v}_{P}, \breve{i}_{P"}\rangle + \langle \breve{i}_{P"}, \breve{v}_{P}\rangle\geq 
\langle\breve{v}^{stat}_{P},\breve{i}^{stat}_{P"}\rangle+ \langle\breve{i}^{stat}_{P"},\breve{v}^{stat}_{P}\rangle.$\\
Equivalently, the power delivered 
by $\breve{\A}_{PP"},$ is maximum when $(\breve{v}_{P}, \breve{i}_{P"})= (\breve{v}^{stat}_{P},\breve{i}^{stat}_{P"}).$
\\If $\breve{\A}_{PP"}$ is strictly passive we have
$\langle \breve{v}_{P}, \breve{i}_{P"}\rangle + \langle \breve{i}_{P"}, \breve{v}_{P}\rangle >
\langle\breve{v}^{stat}_{P},\breve{i}^{stat}_{P"}\rangle+ \langle\breve{i}^{stat}_{P"},\breve{v}^{stat}_{P}\rangle$\\
whenever $(\Delta \breve{v}_{P}, \Delta \breve{i}_{P"})\ne 0,$
so that 
$(\breve{v}^{stat}_{P},\breve{i}^{stat}_{P"})$
is the unique maximum delivery vector in $\breve{\A}_{PP"}.$

By Lemma \ref{lem:maxpower}, if a multiport $\N_P$ is passive,
so is its port behaviour. Therefore, Equation \ref{eqn:optprob51}
also gives the condition for maximum power transfer from a passive $\N_P.$

We thus have, from the above discussion, using Theorem \ref{thm:maxpowerport},
the following result.
%
%
%
%
\begin{theorem}
\label{thm:passivemaxpowerport}
Let  $\mathcal{N}_P,$ on graph $\G_{SP}$
and with passive device characteristic ${\A}_{SS"},$ have the port behaviour
$\breve{\A}_{PP"}.$
Let ${\V}_{SS"}$ be the vector space translate of ${\A}_{SS"}.$
Let $\mathcal{N}^{adj}_{\tilde P}$ be on the disjoint copy $\G_{\tilde{S}\tilde{P}}$ of $\G_{SP},$ with device characteristic ${\V}^{adj}_{\tilde{S}\tilde{S}"}.$
\begin{enumerate}
\item
Let $(\breve{v}^{stat}_{P},\breve{i}^{stat}_{P"})\in
\breve{\A}_{PP"}.$ Then $(\breve{v}^{stat}_{P},\breve{i}^{stat}_{P"})$ satisfies
\begin{align}
\label{eqn:maximize1}
 maximize \ \ \langle \breve{v}_{P}, -\breve{i}_{P"}\rangle + \langle -\breve{i}_{P"}, \breve{v}_{P}\rangle,\ \ \ \ \ \
(\breve{v}_{P},\breve{i}_{P"})\in
\breve{\A}_{PP"},
\end{align}
 iff $(\breve{v}^{stat}_{P},\breve{i}^{stat}_{P"})\in
\breve{\A}_{PP"}\cap(\breve{\V}_{PP"}^{adj})_{P(-P")}.$
\item Let $(\breve{v}^{stat}_{P},-\breve{i}^{stat}_{P"})$ be the restriction of a solution of the multiport $\N_P,$ to $P\uplus P".$
Then $(\breve{v}^{stat}_{P},\breve{i}^{stat}_{P"})$
satisfies the optimization condition in Equation \ref{eqn:maximize1},
iff $(\breve{v}^{stat}_{P},-\breve{i}^{stat}_{P"})$  is the restriction of a solution of the network
$[\mathcal{N}_P\oplus \mathcal{N}^{adj}_{\tilde P}]\cap \T^{{P}\tilde{P}},$
to $P\uplus P".$
\end{enumerate}
\end{theorem}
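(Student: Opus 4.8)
The plan is to obtain the statement by combining three results already in place: Lemma~\ref{lem:maxpower}, which transfers passivity from the device characteristic $\A_{SS"}$ to the port behaviour $\breve{\A}_{PP"}$; the second-order computation in the paragraphs immediately before the theorem, which shows that for a passive port behaviour the stationarity condition implies maximality of delivered power; and Theorem~\ref{thm:maxpowerport}, which relates the stationarity set, the intersection $\breve{\A}_{PP"}\cap(\breve{\V}^{adj}_{PP"})_{P(-P")}$, and the port restrictions of solutions of $[\N_P\oplus\N^{adj}_{\tilde P}]\cap\T^{P\tilde P}$.

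For part~1, I would first invoke Lemma~\ref{lem:maxpower}(1) to get that $\breve{\A}_{PP"}$, hence its vector space translate $\breve{\V}_{PP"}$, is passive. I then claim that, under this passivity, a vector $(\breve{v}^{stat}_{P},\breve{i}^{stat}_{P"})\in\breve{\A}_{PP"}$ maximises the delivered power over $\breve{\A}_{PP"}$ if and only if it satisfies the stationarity condition $((\breve{v}^{stat}_{P})^T,(\breve{i}^{stat}_{P"})^T)={\lambda}^T(-\overline{Q}|\overline{B})$. The ``if'' direction is exactly the displayed computation preceding the theorem: writing a general element of $\breve{\A}_{PP"}$ as $(\breve{v}^{stat}_{P}+\Delta\breve{v}_{P},\breve{i}^{stat}_{P"}+\Delta\breve{i}_{P"})$, the first-order cross terms vanish by the stationarity condition and the quadratic remainder $\langle\Delta\breve{v}_{P},\Delta\breve{i}_{P"}\rangle+\langle\Delta\breve{i}_{P"},\Delta\breve{v}_{P}\rangle$ is nonnegative by passivity of $\breve{\V}_{PP"}$. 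The ``only if'' direction is the standard observation that a maximiser of a smooth functional over the affine set $\breve{\A}_{PP"}$ must annihilate the first-order variation in every direction of $\breve{\V}_{PP"}$, which is Equation~\ref{eqn:optprob31}, equivalent to the stationarity condition by the derivation in Subsection~\ref{subsec:maxpower}. Finally, Theorem~\ref{thm:maxpowerport}(2) identifies the stationarity condition with membership in $\breve{\A}_{PP"}\cap(\breve{\V}^{adj}_{PP"})_{P(-P")}$, completing part~1.

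For part~2, since $(\breve{v}^{stat}_{P},-\breve{i}^{stat}_{P"})$ is the restriction of a solution of $\N_P$ to $P\uplus P"$, the definition of the port behaviour (which reverses the sign of the port current) puts $(\breve{v}^{stat}_{P},\breve{i}^{stat}_{P"})$ in $\breve{\A}_{PP"}$, so part~1 applies: this vector satisfies Equation~\ref{eqn:maximize1} iff it lies in $\breve{\A}_{PP"}\cap(\breve{\V}^{adj}_{PP"})_{P(-P")}$. By Theorem~\ref{thm:maxpowerport}(1), the latter holds iff $(\breve{v}^{stat}_{P},-\breve{i}^{stat}_{P"})$ is the restriction of a solution of $[\N_P\oplus\N^{adj}_{\tilde P}]\cap\T^{P\tilde P}$ to $P\uplus P"$. (Equivalently, part~2 is Theorem~\ref{thm:maxpowerport}(3) with ``stationarity'' replaced by ``maximality'' via part~1.)

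No substantive difficulty is expected, since the analytic work is already carried out in the pre-theorem computation and in Lemma~\ref{lem:maxpower}. The one place that needs attention is the bookkeeping of sign conventions: the port-behaviour current is the negative of the network current, so the $P(-P")$ reindexing must be tracked consistently when passing between ``restriction of a solution of $\N_P$'' and ``element of $\breve{\A}_{PP"}$'', and again through the transformer constraint $\T^{P\tilde P}$; and the trivial ``maximiser $\Rightarrow$ stationary'' step should be stated explicitly so the ``only if'' half of part~1 is not left implicit.
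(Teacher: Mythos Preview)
Your proposal is correct and follows essentially the same route as the paper: the paper presents Theorem~\ref{thm:passivemaxpowerport} as an immediate consequence of the second-order computation preceding it (stationarity $\Rightarrow$ maximum under passivity), Lemma~\ref{lem:maxpower} (passivity of $\A_{SS"}$ $\Rightarrow$ passivity of $\breve{\A}_{PP"}$), and Theorem~\ref{thm:maxpowerport} (stationarity $\Leftrightarrow$ membership in $\breve{\A}_{PP"}\cap(\breve{\V}^{adj}_{PP"})_{P(-P")}$ $\Leftrightarrow$ restriction of a solution of the terminated network). Your explicit mention of the ``maximiser $\Rightarrow$ stationary'' step for the ``only if'' half is a useful clarification that the paper leaves implicit.
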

\begin{remark}
 Let  $\breve{\V}_{PP"}$ be strictly passive.
If $(0_{P},i_{P"})$ or $(v_{P},0_{P"})$ belongs to $\breve{\V}_{PP"},$
its strict passivity is violated. Therefore $r(\breve{\V}_{PP"}\times P)$
as well as $r(\breve{\V}_{PP"}\times P")$ must be zero and consequently
$r(\breve{\V}_{PP"}\circ P")$
as well as $r(\breve{\V}_{PP"}\circ P)$ must equal $|P|,$ using
Theorem \ref{thm:dotcrossidentity}.
Let  $(C_{P}|E_{P"})$ be the representative matrix of
 $\breve{\V}_{PP"}.$ It is clear that $C_{P},E_{P"}$
are nonsingular so that by invertible row transformation, we can reduce
$(C_{P}|E_{P"})$ to the form $(Z^T|I)$ or $(I|Y^T).$ We then have 
$\langle v_{P},i_{P"}\rangle + \langle i_{P"},v_{P}\rangle=
v_{P}^T\overline{i_{P"}}+i_{P"}^T\overline{v_{P}}
= i_{P"}^TZ^T\overline{i_{P"}}+i_{P"}^T\overline{Zi_{P"}}=
\langle i_{P"}, (Z+Z^*)i_{P"}\rangle.
$ The strict passivity is equivalent to $(Z+Z^*)$
being positive definite.
In this case $(Z+Z^*)$ is invertible and maximum transfer port condition is unique.
\end{remark}
\section{Conclusions}
\label{sec:conclusions}
We have given a method for computing the port behaviour 
of linear multiports using standard circuit
simulators which are freely available.
This will work for linear multiports  which have non void solutions for arbitrary internal source 
values and further have a unique internal solution corresponding to
a port condition consistent with the port behaviour.
The procedure involves  termination by the adjoint multiport through a gyrator.
The above procedure can be regarded as the most general form 
of Thevenin-Norton theorem  according to the view point 
that the theorem involves the computation of the unique solutions of 
certain circuits which result by some termination of the linear multiport.

We have used  termination by the adjoint multiport through an ideal transformer
 to give a condition for 
maximum power transfer for general linear multiports.
This is the most general form of the Maximum power transfer theorem 
that is possible in terms of stationarity of power delivered.

{\bf Data availability statement}\\
The authors confirm that 
 there is no associated data for this manuscript.

{\bf Acknowledgements}\\
Hariharan Narayanan was partially supported by a Ramanujan Fellowship.

\appendix
%
%
%
%
%


\begin{thebibliography}{10}
\expandafter\ifx\csname url\endcsname\relax
  \def\url#1{\texttt{#1}}\fi
\expandafter\ifx\csname urlprefix\endcsname\relax\def\urlprefix{URL }\fi
\expandafter\ifx\csname href\endcsname\relax
  \def\href#1#2{#2} \def\path#1{#1}\fi


\bibitem{desoer1} C.~A.~Desoer, The Maximum Power Transfer Theorem 
for n-Ports, IEEE Transactions on Circuit Theory, (1973) 328--330.

\bibitem{desoerkuh} C.~A. ~Desoer \& E.~S.~Kuh, {\it Basic circuit theory}, McGraw-Hill, New York, 1969.

\bibitem{forney2004}
G.~D. ~Forney \& M.~D. ~Trott, {The Dynamics of Group Codes : Dual Abelian Group
  Codes and Systems}, IEEE Transactions on Information Theory 50~(11) (2004)
  1--30.











\bibitem{kron39} G.~Kron, {\it Tensor Analysis of Networks},  J.Wiley,
New York, 1939.

\bibitem{kron63} G. Kron, {\it Diakoptics - Piecewise Solution
of Large Scale Systems}, McDonald,London,1963.

\bibitem{mayer} H.~F.~ Mayer, Ueber das Ersatzschema der Verstärkerröhre [On equivalent circuits for electronic amplifiers], Telegraphen- und Fernsprech-Technik (in German), 15: (1926) 335--337.


\bibitem{murotabook} K. ~Murota, {\it Matrices and Matroids for Systems Analysis},
Springer, Berlin, Heidelberg, New York, Tokyo, 2000.


\bibitem{narayananmp}
H.~Narayanan, On the maximum power transfer theorem, 
Int. J. Elect. Enging Educ., 15 (1978) 161--167.


\bibitem{HNarayananadjoint}
H.~Narayanan, {A Unified Construction of Adjoint Systems and Networks}, Circuit
  Theory and Applications 14 (1986) 263--276.




\bibitem{HNarayanan1997}
H.~Narayanan, {\it Submodular Functions and Electrical Networks}, Annals of Discrete
  Mathematics, vol. 54, North Holland, Amsterdam, 1997.



\bibitem{HNarayanan2009}
H.~Narayanan, {\it Submodular Functions and Electrical Networks}, 
Open 2nd edition, 
 http://www.ee.iitb.ac.in/$\ \tilde{}$ hn/ , 2009. 


\bibitem{narayanan2016} H.~Narayanan, 
Implicit Linear Algebra and its Applications,
 arXiv:1609.07991v1 [math.GM] 
(2016).

\bibitem{norton} E.~L.Norton, Design of finite networks for uniform frequency characteristic, Bell Laboratories, Technical Report TM26--0--1860 (1926).

\bibitem{penfield}
P. Penfield Jr., R. Spence \&  S. Duinker, {\it Tellegen's Theorem and Electrical Networks}, Cambridge, Mass. M.I.T.Press, 1970.


\bibitem{recski19} A. Recski, Hybrid Description and the Spectrum of Linear Multiports,  IEEE Transactions on Circuits and Systems II, 68 (2019) 1502--1506.




\bibitem{tellegen}
B.~D.~H. Tellegen, A general network theorem, with applications, Philips Res. Rept., 7 
(1952) 259--269.

\bibitem{thevenin0} L.~Thevenin, Sur un nouveau theoreme d'électricite dynamique [On a new theorem of dynamic electricity],
 CR des Séances de l'Académie des Sciences, 1883.
\bibitem{thevenin} L.~Thevenin, Extension of Ohm's law to complex electromotive circuits,
 Annales Telegraphiques, 1883.


\bibitem{schaft1999} A.~J.~van~der~Schaft, “Interconnection and geometry”, in The Mathematics of
Systems and Control: from Intelligent Control to Behavioral Systems, Editors
J.W. Polderman, H.L. Trentelman, Groningen, pp. 203–218, 1999.


%



\bibitem{vanderscaftport} A.~J.~van~der~Schaft, ~D.~Jeltsema, Port-Hamiltonian Systems Theory: An Introductory
Overview, Foundations and Trends 
in Systems and Control, vol. 1, no. 2-3,
(2014) 173--378.





















\end{thebibliography}

\end{document}